\def\noeditingmarks{}  %
\renewcommand\footnotetextcopyrightpermission[1]{} %
\algrenewcommand{\algorithmicrequire}{\textbf{Input:}}
\algrenewcommand{\algorithmicensure}{\textbf{Output:}}
\algnewcommand{\LineComment}[1]{\State // #1}
\algrenewcommand\textproc{}%
\newif\ifintrouble
\newif\ifcuttext
\newcommand{\textred}[1]{\begingroup \color{red} #1\endgroup}
   \newcommand{\pgwrapper}[2]{\textred{#1: #2}}
   \newcommand{\pgwrapperb}[1]{\textbf{#1}}
   \newcommand{\pgwrapperb}[1]{}
   \newcommand{\pgwrapper}[2]{}
    \newcommand{\changebars}[2]{%
    [{\em \begingroup \color{magenta} #1 \endgroup}]
    [\begingroup \color{magenta} \sout{#2} \endgroup]}
    \newcommand{\changebars}[2]{#1}
\newcommand{\sys}{{Aero}\xspace}
\newcommand{\Sys}{\sys}
\renewcommand*{\@fnsymbol}[1]{\ensuremath{\ifcase#1\or \star\or \dagger\or \ddagger\or
   \mathsection\or \mathparagraph\or \|\or **\or \dagger\dagger
   \or \ddagger\ddagger \else\@ctrerr\fi}}
\def\hn{\usefont{OT1}{phv}{mc}{n}\selectfont}
\newcommand{\mpfont}{\hn\scriptsize}
\newcommand{\MPworker}[2]{{\color{#1}\vrule\vrule}{\marginpar{\color{#1}\mpfont #2}}}
    \newcommand{\MP}[1]{\MPworker{red}{#1}}
    \newcommand{\MPtg}[1]{\MPworker{red}{#1}}
    \newcommand{\MPnc}[1]{\MPworker{blue}{#1}}
    \newcommand{\MPkl}[1]{\MPworker{brown}{#1}}
   \newcommand{\MP}[1]{}
    \newcommand{\MPtg}[1]{}
     \newcommand{\MPnc}[1]{}
    \newcommand{\MPkl}[1]{}
\newcommand\rmv[1]{}
\newcommand{\techReportOnly}[1]{}
\newcommand{\main}{{\small\textsc{Main}}\xspace}
\newcommand{\userupdate}{{\small\textsc{UserUpdate}}\xspace}
\newcommand{\encplain}{{\small\textsc{Enc}}\xspace}
\DeclarePairedDelimiterX{\norm}[1]{\lVert}{\rVert}{#1}
\newcommand{\cpu}{\textsc{cpu}\xspace}
\newcommand{\cpus}{\textsc{cpu}s\xspace}
\def\compactify{\itemsep0in \topsep0.5pt \parsep=0.00in \partopsep=0pt
\leftmargin2em}
\let\latexusecounter=\usecounter
\newenvironment{myenumerate}
  {\def\usecounter{\compactify\latexusecounter}
   \begin{enumerate}}
  {\end{enumerate}\let\usecounter=\latexusecounter}
\newenvironment{myenumerate2}
  {\def\usecounter{\itemsep=0ex \topsep0.7ex \parsep=1ex \partopsep=0pt
    \leftmargin1.5em\latexusecounter}
   \begin{enumerate}}
  {\end{enumerate}\let\usecounter=\latexusecounter}
\newenvironment{myitemize}%
  {\begin{list}{\labelitemi}{\itemsep3pt \topsep3pt \parsep0.00in
  \partopsep=3pt \leftmargin1em}}%
  {\end{list}}
\def\emparagraph#1{\vspace{1mm}\noindent{\bf #1}}
\def\discretionaryslash{\discretionary{/}{}{/}}
{\catcode`\/\active
\gdef\URLprepare{\catcode`\/\active\let/\discretionaryslash
        \def~{\char`\~}}}%
\def\URL{\bgroup\URLprepare\realURL}%
\def\realURL#1{\tt #1\egroup}%
\begin{document}

\newcommand{\supsyml}[1]{\raisebox{4pt}{\footnotesize #1}}
\newcommand{\rstar}{\supsyml{$\ast$}\xspace}
\newcommand{\rdag}{\supsyml{$\ast\ast$}\xspace}

\date{}

\title{\LARGE{Federated learning with differential privacy and an untrusted aggregator (technical report)}}

\author{Kunlong Liu}
\affiliation{\institution{\fontsize{9.5}{11}\selectfont \textit{University of California, Santa Barbara}}}
\author{Trinabh Gupta}
\affiliation{\institution{\fontsize{9.5}{11}\selectfont \textit{University of California, Santa Barbara}}}

\begin{abstract}
Federated learning for training models over mobile devices is gaining
popularity.
Current systems for this task 
exhibit significant trade-offs between model accuracy, privacy guarantee, and device 
efficiency. 
For instance, Oort (OSDI 2021) provides excellent accuracy and efficiency but requires
a trusted central server. On the other hand, 
 Orchard (OSDI 2020) provides good accuracy and 
the rigorous guarantee of differential privacy 
over an untrusted server, 
but creates huge overhead for the devices.
This paper describes \sys, 
a new federated learning system that significantly improves this
trade-off. 
\sys guarantees good accuracy, differential privacy over an untrusted server,
and keeps the device overhead low.
The key idea of \sys is to tune system architecture and design 
to a specific set of popular, 
federated learning algorithms. 
This tuning requires novel optimizations and techniques, e.g., a new protocol 
to securely aggregate updates from devices.
An evaluation of \sys demonstrates that it 
provides comparable accuracy to plain 
federated learning (without differential privacy), and it
improves efficiency (\cpu and network)
over Orchard by up to $10^{5}\times$.
\end{abstract}

\maketitle
\section{Introduction}
\label{s:intro}
\label{s:introduction}

Federated learning (FL) is a recent paradigm in machine learning that embraces a decentralized training architecture~\cite{mcmahan2017communication}. In contrast to the traditional, central model of learning where users \emph{ship} their training data to a central server, 
users in FL download the latest model parameters from the  server, 
perform \emph{local} training to generate \emph{updates} to the parameters, and send only the updates to the server. Federated learning has gained popularity in training models for mobiles~\cite{hard2018federated,hartmann2019federated, kairouz2021practical,DPFTRL} as it
can save  network bandwidth and it is privacy-friendly---raw data stays at the devices.

Current systems for federated learning exhibit significant trade-offs
between model accuracy, privacy, and device efficiency. 
For instance, one class of systems that includes 
Oort~\cite{lai2021oort}, FedScale~\cite{fedscale-icml22}, and 
FedML~\cite{chaoyanghe2020fedml} provides excellent accuracy (comparable to centralized learning) and device efficiency. But these systems provide only a weak
notion of privacy. This point is subtle. 
At first glimpse, it appears that in any federated learning system, since
users ship updates to model parameters rather than the raw training data, this data 
(user images, text messages, search queries, etc.) remains confidential.
However, research shows that updates can be reverse-engineered to reveal the raw data~\cite{zhu2019deep,melis2019exploiting,shokri2017membership}. 
Thus, if the server is compromised, so is the users' data. In other words, the server must be trusted.

On the other hand, systems such as HybridAlpha~\cite{xu2019hybridalpha}
and Orchard~\cite{roth2020orchard} offer good accuracy 
and a differential privacy guarantee for users' data. Informally, differential privacy
says that an adversary cannot deduce a user's training data by inspecting the updates or the learned model parameters~\cite{dwork2011firm,dwork2006calibrating,dwork2014algorithmic,abadi2016deep}.
In fact, Orchard guarantees differential privacy while assuming a byzantine server. But the downside is \changebars{the}{that 
it creates} high overhead for the devices. 
For example, to train
a CNN model with
1.2 million parameters~\cite{reddi2020adaptive}, Orchard requires from each device $\approx$14 minutes of training time on a six-core processor and $\approx$840~MiB in network transfers\changebars{ per round}{--- per-round}  of training (\S\ref{s:eval:Orchard}). 
The full training requires at least a few hundred rounds. 
Further, for a few randomly chosen devices\changebars{}{ that do additional work}, this per-round cost \changebars{spikes to}{is} 
$\approx$214 hours of \cpu time and $\approx$11~TiB of network transfers.
Clearly, this is quite high.\footnote{Another class of systems provides a particular type 
of differential privacy called local differential privacy (LDP)~\cite{duchi2013local,ijcai2021-217,truex2020ldp}.
These systems are efficient but LDP creates a high accuracy loss~\cite{truex2020ldp,grafberger2021fedless,ijcai2021-217} (\S\ref{s:problem:solutions}, \S\ref{s:relwork}).} %

This paper describes a new federated learning system, \sys, that significantly improves the 
tradeoff between accuracy, privacy, and device overhead.
\Sys provides good accuracy, the differential privacy guarantee in the same threat model as Orchard,
and low device overhead. For instance, most of the time \sys's devices
incur overhead in milliseconds of \cpu time and KiBs of network transfers.

The key idea in \sys is that it does
not aim to be a general-purpose
federated learning system, rather focuses
on a particular class of algorithms (\S\ref{s:dpfedavg}).
These algorithms sample devices that contribute updates 
in a round using a simple probability parameter (e.g., a device is selected
with a probability of $10^{-5}$), then aggregate updates across devices
by averaging them, and generate noise needed for differential privacy
from a Gaussian distribution. 
Admittedly, this is only one class of algorithms, but this class comprises popular algorithms such as 
DP-FedAvg~\cite{mcmahan2017learning} and DP-FedSGD~\cite{mcmahan2017learning} that are the ones
commonly used and deployed~\cite{hard2018federated,hartmann2019federated,fedscale-icml22,chaoyanghe2020fedml}.
With this restriction, \sys tunes
system architecture and design to these algorithms, thereby 
 gaining on performance by orders of magnitude.

This tuning is non-trivial and requires novel techniques and optimizations (\S\ref{s:arch}, \S\ref{s:design}).
As one example, the devices must verify that the byzantine server did not behave maliciously.
One prior technique is to use a summation tree~\cite{roth2019honeycrisp}, where the server explicitly shows its 
work aggregating updates across devices in a tree form; the devices then
collectively check nodes of this tree. This checking, in turn, 
adds overhead to the devices. \Sys addresses this tension between privacy and overhead by 
leveraging the sampling characteristic of DP-FedAvg and similar algorithms: 
the total number of devices that participate in the system (e.g., one billion) 
is much larger than the ones that are sampled to contribute updates. 
Leveraging this characteristic, \sys employs multiple, finer-grained summation trees (rather than a monolithic tree) to 
massively divide the checking work across the large device pool (\S\ref{s:design:add}). 
\Sys further optimizes how each device verifies the nodes of the tree using 
a technique called polynomial identity testing (\S\ref{s:design:add}).
 The aforementioned is just one example of optimization; \sys uses multiple throughout
its architecture (\S\ref{s:arch}) and design (\S\ref{s:design}).

We implemented \sys by extending the FedScale FL system~\cite{fedscale-icml22} (\S\ref{s:impl}). 
FedScale supports plain  federated learning without differential privacy or protection against 
    a byzantine server. However, it is flexible, 
    allows a programmer to specify models in the PyTorch framework~\cite{pytorch}, and 
    includes a host of models and datasets, 
with model sizes ranging from 49K to 3.9M, for easy evaluation.
Our evaluation of 
    \sys's prototype (\S\ref{s:eval})
shows that \sys trains models with comparable accuracy to FedScale, in particular, the plain FedAvg algorithm 
in FedScale (\S\ref{s:eval:FedScale}).
\Sys also improves overhead relative to Orchard by up to five orders of magnitude, to a point where 
the overhead is low to moderate.
For instance, for
    a 1.2M parameter CNN  on the FEMNIST dataset~\cite{reddi2020adaptive, cohen2017emnist}, and 
    for a total population of $10^{9}$ devices where $10^{4}$ contribute updates per round, 
    a \sys device requires 15~ms of \cpu time and 3.12~KiB of network transfers per
round. Occasionally (with a probability of $10^{-5}$ in a round) this overhead increases 
when a device contributes updates, to a moderate 13.4 minutes of latency on a six-core processor 
and 234~MiB in network transfers. 

Prior to \sys, one must choose two of the three properties of 
high accuracy, rigorous privacy guarantee, and low device overhead.
With \sys, one can train models in a federated manner 
with a balance across these properties, 
at least for a particular class of federated learning 
algorithms. Thus, \sys's main contribution is
that it finally shows a way for a data analyst %
to train models while asking the data providers 
to place no trust in the analyst or their company.

\section{Problem and background}
\label{s:problem}

This section outlines the problem and gives a short background
on Orchard~\cite{roth2020orchard} that forms both a baseline and 
an inspiration for \sys.

\subsection{Scenario and threat model}
\label{s:problem:scenario}
\label{s:problem:threatmodel}
We consider a scenario consisting of a data analyst and a large 
number of mobile devices, e.g., hundreds of million.
The analyst, perhaps at a large company such as Google, 
is interested in learning a machine learning model 
over the data on the devices. For instance, the analyst may want to 
train a recurrent neural network (RNN) to provide auto-completion suggestions for the android 
keyboard~\cite{hard2018federated}.%

One restriction we place on this scenario is  
\changebars{that the training must be done}{that the analyst must perform the training} in a federated manner.
(We refer the reader to prior work~\cite{tan2021cryptgpu,knott2021crypTen} %
for a discussion on training in the centralized, 
non-federated model.)
As noted earlier (\S\ref{s:intro}), federated learning proceeds in rounds, 
where in each round
devices download the latest model parameters from a server,  generate
updates to these parameters locally, and send the updates to the server. 
The server aggregates \changebars{}{all} the model updates. 
This repeats until \changebars{the model}{either the model converges
or} achieves a target accuracy.

In this scenario, a malicious server, or even a malicious device, can execute many attacks.
For instance, a malicious server can infer the training data of a device from the updates contributed by the device~\cite{zhu2019deep,melis2019exploiting,shokri2017membership}.
Similarly, a malicious device that receives model parameters from the server can execute an inference attack to learn another device's input.%

We assume the strong \emph{OB+MC} threat model from Orchard. 
The server is honest-but-curious most of the time but occasionally byzantine (OB), while 
the devices are mostly correct (MC), but a small fraction can be
malicious.
The rationale behind the occasionality of the server's maliciousness is that the server's 
operator, e.g., Google, is reputed and 
subject
to significant scrutiny from the press and the users, and thus unlikely to be byzantine for long.
However, it may occasionally come under attack, e.g., from a rogue employee.
The rationale behind the smallness of the fraction of malicious devices 
is that with billions of devices, 
it is unlikely that an adversary will 
control more than a small percentage. 
For instance, for a billion devices, only controlling 3\%  
is already significantly larger than a large botnet. We further assume that 
a configurable percentage of honest devices may be offline during any given round of training.%
    
\subsection{Goals}
\label{s:problem:goals}
Under the OB+MC threat model, we want our system to meet the following goals.

\emparagraph{Privacy.}
It must guarantee the gold standard 
definition of privacy, i.e., differential privacy (DP)~\cite{dwork2011firm,dwork2006calibrating,dwork2014algorithmic,abadi2016deep}.
Informally, a system offers 
    DP for model training 
    if the probability of learning a particular set of model parameters %
    is (approximately) independent of whether a  device's input is 
    included in the training. 
This means that DP prevents inference attacks~\cite{Naseri2022LocalAC} where a particular device's input is revealed, 
    as models are (approximately) independent of its input.

\emparagraph{Accuracy.}
During periods when the server or the devices that contribute in a round are not byzantine, our system must produce models with accuracy comparable to models trained via plain federated learning. That is, we
want the impact of differential privacy to be low. Further, we want the system to mitigate a malicious device's impact on accuracy and prevent it from supplying arbitrary updates.

\emparagraph{Efficiency and scalability.} 
We want the system to support models with a large number of
parameters while imposing a low to moderate device-side overhead.
For the former, a reference point
is the android 
keyboard auto-completion model (an RNN) with 1.4M parameters~\cite{hard2018federated}.
For the device overhead, if a device participates
regularly in training, e.g., in every round, then it 
    should incur no more than a few seconds of 
    \cpu and a few MiBs in network transfers per round. 
However, we assume that devices can tolerate occasional amounts of additional work, 
contributing tens of minutes of \cpu and a few hundred MiBs in network transfers.

\subsection{Possible solution approaches}
\label{s:problem:solutions}
Meeting all of the goals described above is quite challenging. 
For illustration, consider the following solution approaches. 

\emparagraph{Local differential privacy.}
One option to guarantee differential privacy 
is to pick a federated learning algorithm that incorporates\changebars{}{ a particular type of differential privacy
called} local differential privacy (LDP)~\cite{duchi2013local}.
In LDP-based federated learning, 
devices add statistical noise to their updates before uploading them to the server.
The added noise protects updates against a malicious server which now cannot execute
an inference attack, but LDP
significantly degrades model accuracy relative to plain 
federated learning
as every device must add noise. For instance, 
the LDP-FL~\cite{ijcai2021-217} system trains a VGG model over \changebars{CIFAR10}{the CIFAR10 dataset} with 
10\% accuracy\changebars{ compared to 62\% with plain federated learning}{, while the same model achieves 62\% accuracy with plain federated learning}.

\emparagraph{Trusted server.}
\changebars{One alternative}{The alternative to LDP} is to use central differential privacy~\cite{abadi2016deep}, where
a central entity adds a smaller amount of noise to device updates to
ensure differential privacy. 
\changebars{This approach mitigates the accuracy issue, but provides weak to no privacy as the central entity sees devices' updates.}{Though this approach mitigates the accuracy issue, it provides weak to no privacy as
the central entity  %
sees devices' updates and can perform inference attacks.}

\emparagraph{Server-side secure multiparty computation (MPC).}
One way to reduce trust in the central server is to break it down into multiple non-colluding pieces, 
e.g., three servers, that run in separate
administrative domains. Then, one would run a secure multi-party computation protocol~\cite{yao1982protocols,goldreich2019play}
among these servers such that they holistically perform the required computation (noise generation, addition, etc.) 
while no individual 
server sees the input or intermediate state of
the computation. The problem is that we must still put significant trust in the server---that an adversary cannot compromise, say, two administrative domains.

\emparagraph{Large-scale MPC.}
One can remove trust in the server by instead running
MPC among the devices (essentially the devices perform the server's work). The problem now shifts to 
efficiency and scalability: general-purpose MPC protocols
are expensive and do not scale well with the number of participants~\cite{scaleMamba,damgaard2012multiparty}. %
Indeed, scaling MPC to a few hundred or thousand participants 
is an active research area~\cite{gordon2021more, ben2021large}, 
let alone hundreds of millions of participants.

\emparagraph{State-of-the-art: Orchard.}
\begin{figure}[t]
 {\includegraphics[width=\columnwidth]{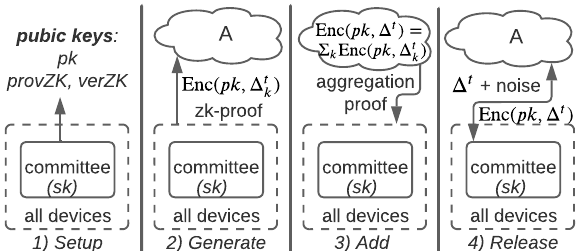}}
\caption{An overview of Orchard~\cite{roth2020orchard}. $\Delta_k$ denotes $k$-th device's update. The superscript $t$ denotes 
the  round number. Orchard runs the four phases of setup, generate, add, and release for every round.}
\label{f:orchard}
\label{fig:orchard}
\end{figure}

Orchard~\cite{roth2020orchard} takes a middle ground. It runs small(er)-scale MPC among devices 
while assuming an (occasionally) byzantine server.
In particular, it forms a \emph{committee} of a few tens of devices picked randomly
from the entire population of devices; this committee then runs MPC among its members.
Figure~\ref{f:orchard} shows an overview of Orchard. Orchard
supports the full-batch gradient descent algorithm where all devices contribute 
updates in a round.
For every round, 
Orchard runs the following four phases.

In the setup phase, Orchard samples the committee. 
Its members use MPC
to generate keys for two cryptographic primitives: additive homomorphic encryption (AHE)~\cite{rivest1978data,fan2012somewhat} and zero-knowledge proof (ZK-proof)~\cite{groth16on,goldwasser2019knowledge}.

In the generate phase, devices 
download the latest model parameters from the server. 
After local training, they encrypt their model updates (denoted by $\Delta$ in Figure~\ref{f:orchard}) 
and generate a ZK-proof 
to prove to the server that the ciphertexts are well-formed and the model updates are bounded. The encryption hides the updates from the byzantine server and the proof limits the impact of malicious devices (they cannot supply arbitrary updates and thus ruin the model accuracy).

In the add phase, the server homomorphically adds the encrypted updates. 
The server also generates proof that it performed the addition correctly so that all devices 
can collaboratively verify the addition. This verification is necessary to prevent a byzantine server
from launching subtle attacks to break DP. (We will discuss these attacks further in \S\ref{s:design:add}.)

In the release phase, the committee from the setup phase 
uses MPC to decrypt the output ciphertexts from the add phase. 
The committee also generates and adds the DP noise to the output, before releasing it,
to guarantee (central) DP. 

The challenge with Orchard is that even though it uses MPC at a smaller scale, the MPC overhead is still high. 
First, in the setup phase, committee devices generate fresh keys for each round, and generating one AHE key pair inside general-purpose MPC requires $\approx$180 seconds of \cpu and 1~GiB of network transfers. Second, the overhead of the add phase to verify the server's work grows linearly with the model size and becomes unpragmatic as soon as the model has a few hundred thousand parameters.
Third, in the release phase, committee devices decrypt ciphertexts and generate DP noise inside general-purpose MPC, costing, for example, $\approx$2600 seconds of \cpu and $\approx$\changebars{38}{19}~GiB of network transfers per device for a model with just 4K parameters.

\smallskip

In general, providing high accuracy, differential privacy, and device 
efficiency simultaneously in a threat model where there is no trusted party 
proves incredibly challenging.

\section{Overview of \sys}

The high-level idea in \sys is to focus on a specific type of federated learning algorithms comprising DP-FedAvg~\cite{mcmahan2017learning}, DP-FedSGD~\cite{mcmahan2017learning}, and DP-FTRL~\cite{kairouz2021practical}. These algorithms have similar characteristics; for instance, they all sample noise for differential privacy from a Gaussian distribution. To keep \sys easier to explain and understand, we take the most popular DP-FedAvg as the canonical algorithm and describe \sys in its context.

\subsection{DP-FedAvg\changebars{ without amplification}{}}
\label{s:dpfedavg}
\label{s:overview:dpfedavg}
\begin{figure}[t]
\begin{small}
{

\newcommand{\Stateindented}{\State\hspace*{5mm}}

\begin{algorithmic}[1]

    \Function{\textbf{\main}}{:}
    \State \textit{parameters}
            \Stateindented device selection probability $q\in (0,1]$
            \Stateindented DP noise scale $z$
            \Stateindented total \# of devices $W$
            \Stateindented clipping bound on device updates $S$
            
            \smallskip
            \State Initialize model $\theta^0$, DP privacy budget accountant $\mathcal{M}$
            
            \For{each round $t = 0, 1, 2, \ldots$} \label{l:dpfedavg:forloop}
                \State $C^{t} \gets$ (sample users with probability $q$) \label{l:dpfedavg:sampleclients}
                \For{each user $k \in C^{t}$}
                
                    \State $\Delta_{k}^{t} \gets \Call{\userupdate}{k, \theta^{t}, S}$ \label{l:dpfedavg:userupdate}
                \EndFor
                \LineComment{Add updates and Gaussian DP noise with $\sigma = zS$} %
                \State $\Delta^{t} \gets \sum_k \Delta_{k}^{t}+ \mathcal{N}(0,I\sigma^2)$ \label{l:dpfedavg:aggregateupdates} %
                \State $\theta^{t+1} \gets \theta^{t} + (\Delta^t/(qW))$ \Comment{Update model} \label{l:dpfedavg:updateglobalmodel}
                    
                \State Update $\mathcal{M}$ based on noise scale $z$ and parameter $q$ \label{l:dpfedavg:updateprivacybudget} %
            \EndFor
    \EndFunction
    
    \medskip
    \Function{\textbf{\userupdate}}($k$, $\theta^0$, $S$)
     \State \textit{parameters} $B, E, \eta$ \Comment{$\eta$ is learning rate}
     \smallskip
     \State $\theta \gets \theta^0$
     \For{each local epoch $i$ in $1$ to $E$}
        \State $\mathcal{B} \gets$ ($k$'s data split into size $B$ batches)
        \For{batch $b \in \mathcal{B}$}
            \State $\theta \gets \theta - \eta \nabla \ell (\theta; b)$ \Comment{$\ell$ is loss fn. (model err.)} \label{l:dpfedavg:gradientdescentstep}
            \State $\theta \gets \theta^0 + Clip(\theta - \theta^0, S)$ \label{l:dpfedavg:clipping}
        \EndFor
     \EndFor
     \State \Return $\Delta_k = \theta - \theta^0$ \Comment{Already clipped}
    
    \EndFunction
    
\end{algorithmic}
}
\end{small}
\caption{%
Pseudocode for the DP-FedAvg algorithm. $Clip(\cdot, S)$ scales its input vector 
such that its norm (Euclidean distance from the origin) is less than $S$. $\mathcal{M}$ is the privacy budget accountant of Abadi et al.~\cite{abadi2016deep} that tracks the values of the DP parameters $\epsilon$ and $\delta$.}
\label{f:dpfedavg}
\label{fig:dpfedavg}
\end{figure}

\begin{figure*}[t]
  \centering{\includegraphics[width=\textwidth]{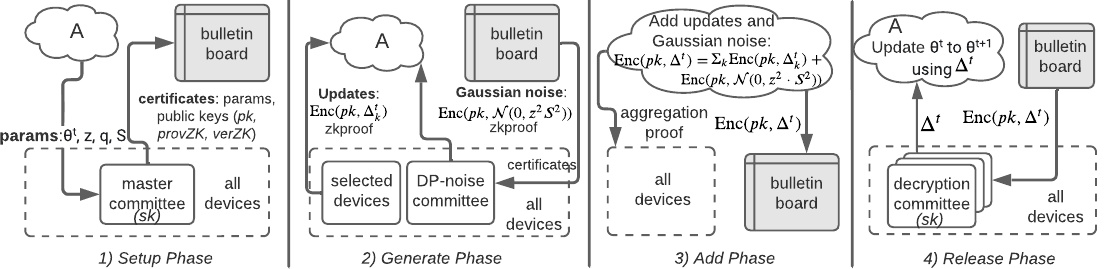}}
\caption{%
An overview of \sys's architecture and the four phases of its protocol.}
\label{f:overview}
\label{fig:overview}
\end{figure*}

DP-FedAvg proceeds in discrete rounds (Figure~\ref{f:dpfedavg}). 
In each round $t$, it samples a small subset of user devices using a probability parameter $q$ (line~\ref{l:dpfedavg:sampleclients}), and asks the sampled devices to provide updates to the global model parameters (line~\ref{l:dpfedavg:userupdate}). 
The devices locally generate the updates before clipping them by a value $S$ and uploading them (line~\ref{l:dpfedavg:clipping}); this clipping is necessary for differential privacy and it bounds the norm (sensitivity) of a device's update. %
DP-FedAvg then aggregates these updates (line~\ref{l:dpfedavg:aggregateupdates}) and (separately) adds noise sampled from a Gaussian distribution. 
The standard deviation of the Gaussian distribution depends on 
a noise scale parameter $z$ and the clipping bound $S$; both are 
input parameters for DP-FedAvg. %
Finally, DP-FedAvg updates a privacy accountant $\mathcal{M}$ that computes, based on 
the noise scale $z$ and sampling probability $q$, 
two parameters $\epsilon$ and $\delta$ associated with differential privacy (line~\ref{l:dpfedavg:updateprivacybudget}).
These parameters capture the strength of the guarantee: how much 
the model parameters learned after a round vary depending on a device's input.
A lower value of $\delta$ and $\epsilon$ is desirable, and the literature
recommends ensuring  
that $\epsilon$ stays close to or below $1$, and $\delta$ is less than $1/W$, 
where $W$ is the total number of devices~\cite{mcmahan2017learning}.

DP-FedAvg has three characteristics that are crucial for \sys.
The first  is the sampling of devices (lines~\ref{l:dpfedavg:sampleclients} to~\ref{l:dpfedavg:userupdate} in Figure~\ref{f:dpfedavg}).
For instance, the sampling parameter $q$ could be $10^{-5}$ such that 
10{,}000 out of, say, 
$10^{9}$ total devices contribute updates in a round.
The second characteristic is that the noise
is sampled from a Gaussian distribution whose standard deviation $\sigma$ is predetermined 
(set before the algorithm is run). 
This is in contrast to other DP algorithms that utilize techniques such as the Sparse Vector Technique (SVT) that
generate noise depending on the value of the updates~\cite{roth2010interactive,dwork2009complexity}.
The third characteristic is averaging of updates: 
    DP-FedAvg simply adds updates and noise (line~\ref{l:dpfedavg:aggregateupdates} in Figure~\ref{f:dpfedavg}) 
    rather than combining them
    using a more complex function. \Sys heavily leverages these characteristics.

Finally, we remark that \sys can support DP-FedAvg only without the amplification assumption for DP.
This is because the adversary (the byzantine server) can observe all traffic and knows which devices contribute updates for training. 
In contrast, the amplification assumption requires the server to be oblivious to the contributors, which in turn improves the privacy budget. We leave 
the addition of expensive oblivious approaches (which hides who is contributing updates besides hiding the updates themselves) to future work.

\subsection{Architecture of \sys}
\label{s:arch}
\label{s:overview:arch}
\Sys borrows two system components from Orchard: 
an \emph{aggregator}
and a public
\emph{bulletin board} (Figure~\ref{f:overview}).
The aggregator runs server-side inside a data center and therefore 
    consists of one or more
powerful machines.
    Its main role is to combine updates from user devices
        without learning their content.
        The bulletin board is an immutable append-only log.
The aggregator (which is potentially malicious) and the devices use the bulletin board 
to reliably broadcast messages 
and store states across rounds, e.g., the latest values of 
DP parameters $\epsilon$ and $\delta$.
Like Orchard~\cite{roth2020orchard}, \sys
        assumes that 
    free web services such as Wikipedia, or a public block-chain
    could serve as the bulletin board.

Like Orchard, \Sys also consists of \emph{committees} of devices, but 
instead of a single committee as in Orchard, 
\Sys has three types of committees tailored to the needs of DP-FedAvg (and similar algorithms).
A \emph{master committee} handles system setup, 
    including key generation for cryptographic primitives.
A \emph{DP-noise committee} handles
Gaussian noise generation.
And multiple \emph{decryption committees}
perform decryption operations
    to release updates to the global model
parameters at the end of a training round.
\Sys samples each committee afresh
    each round, dividing the committee
workload across the large population of devices.

An architecture with separate committee types is deliberate and
crucial.
It helps tailor a committee's protocol to its tasks
to significantly improve efficiency. 
Besides, the use of
multiple committees of the same type, i.e., 
multiple decryption committees, 
helps \sys scale with model size
as each committee works on
a subset of model  parameters.

Notably, the ability to use separate committee types is 
possible
only because of the specifics of DP-FedAvg. 
For instance, the fact that 
Gaussian noise generation does not depend on the value of the updates
allows \sys to separate the DP-noise committee from the decryption committees.

\subsection{Protocol overview of \sys}
\label{s:protocoloverview}
\label{s:overview:protocol}

To begin training a model, a data analyst supplies
input parameters
(the model architecture, the initial model parameters, and the other input 
parameters for DP-FedAvg)
to the aggregator. The aggregator then initiates 
a \emph{round-based protocol} consisting of discrete rounds.
In each round, it executes one iteration of the 
    for loop
    in the $\main$ procedure of DP-FedAvg (line~\ref{l:dpfedavg:forloop} 
    in Figure~\ref{f:dpfedavg}).
Each round further consists of the four phases of
    \emph{setup}, 
    \emph{generate},
    \emph{add},
    and \emph{release} (Figure~\ref{f:overview}).

In the setup phase, the aggregator samples the various committees for the round. 
The master committee then
receives and validates the input parameters, and generates keys for an AHE and a
ZK-proof scheme. \Sys's setup phase is similar to Orchard (\S\ref{s:problem:solutions}) with
the key difference that \sys's master committee uses techniques  to reuse keys across rounds rather than generating
them fresh for each round using MPC. %

Next, in the generate phase, (i) 
    devices select themselves 
to generate updates for the round,  and (ii)
the DP-noise committee generates the Gaussian noise for DP.
Both types of devices
use techniques to perform their work efficiently.
For instance, the DP-noise committee generates noise in a distributed manner while avoiding MPC.

Next, in the add phase, 
the aggregator adds the model updates to the Gaussian noise without 
learning the plaintext content of either of them. 
This is done through the use of the AHE scheme. 
The entire population of devices collectively verifies the aggregator's work. 
Again, the key is efficiency for the devices, for which the aggregator and the devices
use a new verifiable aggregation protocol.

Finally, in the release phase, each decryption committee receives the 
secret key for the AHE scheme from the master committee and 
decrypts a few ciphertexts from the add phase.
The key point is that 
a decryption committee avoids general-purpose MPC 
by using a specialized decryption protocol.

\section{Design of \sys}
\label{s:design}

We now go over the design details of \sys
phase-by-phase.
The main challenge in each phase is keeping the device
overhead low while protecting against the malicious aggregator
and the malicious subset of devices. 
We highlight these challenges, 
and \sys's key design choices and techniques.

But before proceeding, we briefly discuss committee formation, which is common to 
multiple phases.
To form committees, \sys uses
the sortition protocol from Orchard (which in turn used Algorand's
protocol~\cite{gilad2017algorand}).
This protocol relies on a publicly verifiable
source of randomness so that the results of the election
    are verifiable by all devices.
At the end of the protocol, 
    the aggregator publishes
    the list (public keys) of the committee members 
    by putting it
    on the bulletin board. An important aspect 
of committee formation is  committee size and the number of malicious devices in a committee:%
provision of a larger number of
malicious devices $A$ relative to the committee size $C$ increases costs but ensures 
higher resiliency. 
Like Orchard, \sys makes a probabilistic argument~\cite{roth2019honeycrisp}
to select $C$ and $A$ such that the 
probability of the number of malicious devices exceeding $A$ is small.
For example, if 
  the overall population%
        contains up to $f = 3\%$ malicious devices (\S\ref{s:problem:threatmodel}), 
    then the probability that a randomly sampled subset of $C = 45$
    devices contain more than $A = 2C/5 = 18$ malicious 
    devices is less than $9.6 \cdot 10^{-14}$.

\subsection{Setup phase}
\label{s:design:setup}

Much of \sys's setup phase is similar to Orchard. 
During this phase, (i) the aggregator
    samples
    the master committee, which then
    (ii) receives and validates inputs for the round
     (i.e., receives model parameters $\theta^t$ for the current round $t$,
the device selection probability $q$, 
noise scale $z$, and clipping bound $S$, and generates
    new values of the DP parameters $\epsilon, \delta$),
    and (iii) generates keys for cryptographic 
    primitives (\S\ref{s:protocoloverview}).
We do not focus on the first two pieces to avoid repetition with Orchard, but 
    include them
    in the supplementary material for completeness (Appendix~\S\ref{s:appendix:design}).
Instead, the
    key challenge in \sys is the overhead of key generation.%

Recall (\S\ref{s:problem:solutions}) that Orchard uses MPC among the master committee members to correctly run the key generation function
and ensure that 
even if the malicious members of the committee collude,
they cannot recover the AHE secret key. 
The overhead of this MPC is high: $\approx$1~GiB of 
network transfers and 180 seconds of \cpu time per committee device.%
How can this overhead be reduced?

One idea~\cite{roth2021mycelium} is to reuse keys across rounds rather than generate them afresh 
for each round. Indeed, this is what \sys does: the master committee in 
round 1 generates the keys and 
shares them with the committee for the next round, and this committee then shares the keys with the committee for the third round, and so on. 
But one has to be careful.

Consider the following attack. 
Say that the malicious aggregator receives a victim device $k's$ update $Enc(pk, \Delta_{k}^{t})$ in round $t$.
Then, in the next round $t+1$, 
the aggregator colludes with a malicious device in the overall population to use $Enc(pk, \Delta_{k}^{t})$ as the device's update. 
This attack enables the aggregator to 
violate differential privacy as
the victim device's input does not satisfy the required clipping bound $S$ in round $t+1$ 
due to its multiple copies 
(\S\ref{s:dpfedavg}). Orchard does not suffer from this attack as it 
generates fresh keys in each round: the ciphertext for round $t$ 
decrypts to a random message with round $t+1$'s key. However, prior 
work that reuses keys in this manner 
(in particular, Mycelium~\cite{roth2021mycelium}) does suffer from this attack.

Thus, \sys must apply the reuse-of-keys idea with care. 
\Sys adjusts the generate and add phases of its protocol (\S\ref{s:overview:protocol})
to prevent 
the aforementioned attack. We are not in a position yet to describe these changes, but we will
detail them shortly when we describe these other phases (\S\ref{s:design:generate}, \S\ref{s:design:add}).
Meanwhile, 
    the changes in the setup phase relative to Orchard are the following: 
    for the AHE secret key $sk$,
    \sys implements an efficient verifiable secret redistribution
    scheme~\cite{gupta2006extended,roth2021mycelium}
    such that committee members at round $t+1$ 
    securely obtain the relevant shares of the key 
from the committee at round $t$.
For the public keys (AHE public key $pk$, and both the ZK-proof public proving and verification keys),
the committee for round $t$ signs a certificate containing these keys and uploads it 
to the bulletin board, and the committee for round $t+1$ downloads it from the board.

The savings by switching from key generation to key resharing are substantial for the network, with 
a slight increase in \cpu.
While the MPC solution incurs $\approx$1~GiB of network transfers and 180 seconds
of \cpu time per committee device, 
key resharing requires 125~MiB and 187 seconds, respectively ({\S\ref{s:eval:Orchard}}).
The \cpu is higher because key resharing requires
certain expensive field exponentiation operations~\cite{gupta2006extended}.

\subsection{Generate phase}
\label{s:design:generate}

Recall from \S\ref{s:overview:protocol} that during this phase 
(i) \sys must pick a subset of devices
to generate updates to the model parameters,
(ii) the DP-noise committee must generate
Gaussian noise for differential privacy, 
and (iii) both types of devices must 
encrypt their generated data (updates and noise).

\emparagraph{Device sampling for updates.}
One design choice is to ask the aggregator to sample devices that will contribute updates.
The problem with this option is that the (malicious) aggregator may choose the devices non-uniformly; for instance, it may pick
an honest device more often than the device should be picked, violating differential privacy.
An alternative is to ask the devices
to sample themselves with probability $q$ (as required by DP-FedAvg; line~\ref{l:dpfedavg:sampleclients} in Figure~\ref{f:dpfedavg}). 
But then a malicious device may pick
itself in every round, which would allow it to significantly affect model
accuracy.

\sys adopts a hybrid and efficient design in which devices sample themselves but
the aggregator verifies the sampling.
Let $B^{t}$ be a publicly verifiable source of
randomness for round $t$; this is the same randomness that is used in the sortition protocol
to sample committees for the round. Then, each device $k$
with public key $\pi_k$ computes 
$PRG(\pi_k || B^{t})$, where $PRG$ is a pseudorandom generator.
Next, the device
scales the PRG output to a value between 0 and 1, and checks if the result is less than $q$. 
For instance, if the PRG output is 8 bytes,
then the device divides this number by $2^{64} - 1$.
If selected, the device runs the $\userupdate$ procedure
(line~\ref{l:dpfedavg:userupdate} in Figure~\ref{f:dpfedavg}) to generate updates
for the round.
This approach of sampling is efficient as
devices only perform local computations.

\emparagraph{Gaussian noise generation.}
The default option is to make the 
DP-noise committee generate the noise using MPC,
but as noted
several times in this paper, this option is expensive. 
Instead, \sys adapts prior work~\cite{truex2019hybrid} on distributed Gaussian noise generation. 
The Gaussian distribution has the property that if an element 
sampled from $\mathcal{N}(0, a)$ is added to another element sampled from 
$\mathcal{N}(0, b)$, then the sum is a sample of $\mathcal{N}(a+b)$~\cite{truex2019hybrid,xu2019hybridalpha,dwork2006our}.
This works well for
the simple case when all $C$ committee members of the DP-noise committee 
are honest. 
Given the standard deviation of the Gaussian distribution, $\sigma = z \cdot S$, the devices can 
    independently compute their additive share.
That is, 
to generate samples from $\mathcal{N}(0, I \sigma^2)$ (line~\ref{l:dpfedavg:aggregateupdates} in Figure~\ref{f:dpfedavg}), each
committee member can sample its share of the noise from the distribution
$\mathcal{N}(0,I\frac{\sigma^2}{C})$.

The challenge in \sys is therefore:
how do we account for the $A$ malicious devices in the DP-committee?
These devices may behave arbitrarily and may thus generate 
either no noise or large amounts of it. 
Adding unnecessary noise hurts accuracy, not privacy. 
In contrast, failing to add noise may violate privacy. 
We thus consider the worst case in which malicious users fail to 
add any noise and ask honest devices to compensate. 
Each honest client thus samples its noise share from the 
distribution $\mathcal{N}(0,I\frac{\sigma^2}{C - A})$.\footnote{\Sys can further compensate for 
honest-but-offline devices. Say, for e.g., that $B$
of $C-A$ honest devices must be provisioned to be offline.
Then, \sys subtracts $B$ from $C-A$ to get the number of honest-but-online devices.}

This algorithm generates noise cheaply without expensive MPC. 
The downside is that it may generate 
more noise than necessary, hurting accuracy. To mitigate this risk, we carefully choose
the committee size to minimize the ratio of additional noise. Specifically, we choose $C$ 
to keep the ratio $(C - A) / C$ close to 1. 
For instance,
instead of picking a committee containing a few tens of devices similar to the master committee, 
we pick a somewhat larger DP-noise committee: $(A, C) = (40,
280)$.\footnote{Using a probabilistic argument for committee size selection as
before (\S\ref{s:design}), 
if $f=3\%$ devices in the overall population are malicious, then 
the chances of sampling 280 devices with more than 1/7th malicious is 
$4.1 \cdot 10^{-14}$.}

\emparagraph{Encryption and ZK-proofs.}
Once the devices generate their updates or shares of the Gaussian noise, they encrypt
\changebars{the}{their generated} content using the public key of the AHE scheme 
to prevent the aggregator from learning the content. Further, they
 certify using a ZK-proof scheme 
 that the encryption is done correctly and the data being 
encrypted is bounded by the clipping value $S$ (so that malicious devices may not supply arbitrary updates).
This encryption and ZK-proof generation is same as in Orchard, but \sys requires
additional changes. 
 Recall from the setup phase that 
 \sys must ensure a ciphertext generated in a round is used 
    only in that round, to prevent complications due to 
    reuse of keys (\S\ref{s:design:setup}).
To do this, each device
    concatenates the round number $t$ (as a timestamp)
to the plaintext message before encrypting it.
Further, the ZK-proof includes additional constraints
that prove that a prefix of the plaintext 
message equals the current 
round number.

\begin{figure*}[t]
\hrule
\medskip
{
\textbf{Add phase protocol of \sys}

\vspace{-0.02ex}
\textit{Commit step}
\vspace{-0.5ex}
\begin{myenumerate2}

    \item %
    \label{l:addphase:generatecommitments}
    Each device (with public key $\pi_i$)
    that generates ciphertexts $(c_{i1}, \ldots, c_{i\ell})$
    and corresponding ZK-proofs $(z_{i1}, \ldots, z_{i\ell})$
    in the generate phase does the following: for each $j \in [1,\ell]$, 
    generates a commitment to $(\pi_i, c_{ij})$, namely $t_{ij} = Hash(r_{ij} ||c_{ij}||\pi_i)$, 
    where $r_{ij}$ is a random 128-bit nonce. The device sends
    $(\pi_i,t_{i1},...,t_{i \ell})$ to the aggregator $\mathcal{A}$. 
   
    \item $\mathcal{A}$ checks that $PRG(\pi_i || B^{t}) \leq q$
or $\pi_i$ is a member of the DP-noise committee. Otherwise,  
    it ignores the message.
    
    \item %
    \label{l:addphase:merkletreeforcommitments}
    For each $j\in [1,\ell]$, $\mathcal{A}$ sorts pairs 
    $(\pi_i,t_{ij})$ by $\pi_i$ to form an array of tuples 
    $Commit_j$. $\mathcal{A}$ then
    generates a Merkle tree $MC_j$ from array
    $Commit_j$ and publishes the root to the bulletin board.
    
    \item %
    Each device 
        from step~\ref{l:addphase:generatecommitments} above
        sends $(\pi_i,c_{ij},r_{ij},z_{ij})$ to the aggregator, for each $j \in [1, \ell]$.
   
    \item $\mathcal{A}$ checks the messages. 
    If any 
    proof $z_{ij}$ or any commitment
    $t_{ij}=Hash(r_{ij}||c_{ij}||\pi_i)$ is incorrect, it 
    ignores the message. 
    
\end{myenumerate2}

\vspace{-0.4ex}
\textit{Add step}
\vspace{-0.5ex}
\begin{myenumerate2}
\setcounter{enumi}{5}

    \item 
    \label{l:addphase:createsummationtrees}
    For each $j\in [1,\ell]$, $\mathcal{A}$ computes a summation tree
$ST_j$.  The leaves are $ST_j(0,i)=(\pi_i,c_{ij},r_{ij},z_{ij})$
    if $\mathcal{A}$ got a correct message and $(\pi_i,\bot)$ otherwise. 
    Each non-leaf vertex has two children and a parent ciphertext is the sum of
its children ciphertexts.
    
    \item 
    \label{l:addphase:publishmerklesummationtrees}
For each $j\in [1,\ell]$, $\mathcal{A}$ serializes all vertices of
$ST_j$ into an array and publishes a Merkle tree $MS_j$, as well as the root of
the summation tree $ST_j$. 
    To each device that sent a correct leaf, $\mathcal{A}$ also sends a proof
that this leaf is in $MS_j$.

\end{myenumerate2}

\vspace{-0.4ex}
\textit{Verify step}
\vspace{-0.5ex}

\raggedright\hspace{0.4em} Every device in the system does the following:
\begin{myenumerate2}
\setcounter{enumi}{7}
    \item Retrieve the \# of leaf nodes $M'$ per summation tree from
$\mathcal{A}$.
Verify $M'\leq M_{max}$, where $M_{max}$ is a conservative bound on the total
\# of devices that contribute input, i.e., $W_{max}$ times $q$,
where $W_{max}$ is an upper bound on the total \# of mobiles.

    \item 
    \label{l:addphase:picksummationtrees}
    With probability $q$, select elements from 
    $[1,\ell]$ to form an index array $idx$ 
    containing indices of the trees to inspect.
   
    \item 
    \label{l:addphase:pickleafvertices}
    Choose a random $v_{init}\in [0,M'-1]$. 
    For each $j\in idx$, and 
        for each $i \in [v_{init},v_{init}+s)\mod M'$, 

    \begin{myenumerate}

        \item 
               verify that $ST_j(0,i)=(\pi_i,\bot)$ or
$(\pi_i,c_{ij},r_{ij},z_{ij})$, 
    and $\pi_i\le \pi_{i+1}$ (except if $i=M'-1$)

        \item 
    \label{l:addphase:checkzkproof}
    if $ST_j(0,i)\neq (\pi_i,\bot)$,
        verify that commitment $t_{ij}$ 
                    appears in $MC_j$, 
$t_{ij} = H(r_{ij} ||c_{ij} || \pi_i)$, and $z_{ij}$ is valid %

        \item 
        check $ST_j(0,i)$ is in $MS_j$

    \end{myenumerate}

    \item 
    \label{l:addphase:verifynonleaf}
    For each $j\in idx$, choose $s$ distinct non-leaf
vertices of $ST_j$. 
To reduce redundancy, this includes the $s/2$ vertices 
    whose children the device has already obtained in line~\ref{l:addphase:pickleafvertices}. 
For each such non-leaf vertex, verify
that its ciphertext equals the sum of the children ciphertexts,
    and that the vertex and its children are in $MS_j$.
\end{myenumerate2}

}
\hrule
\caption{%
\Sys's verifiable aggregation. This description
does not include the PIT optimization (described in text)
that applies to line~\ref{l:addphase:verifynonleaf}.}
\label{f:addphase}
\end{figure*}

\subsection{Add phase}
\label{s:design:add}

Recall that during the add phase (i) the aggregator adds ciphertexts containing device updates to those 
containing shares of Gaussian noise, (ii) the devices collectively verify the aggregator's addition (\S\ref{s:overview:protocol}).

This work during the add phase has subtle requirements. So first,
we expand on these requirements while considering a toy example with
 two honest and a malicious device. The first honest device's
input is $\encplain(pk, \Delta)$, where $\Delta$ is its update, while
the second honest device's input is
$\encplain(pk, n)$, where $n$ is the Gaussian noise.
For this toy example, first (\textbf{R1}), the aggregator must not omit $\encplain(pk, n)$ 
from the aggregate 
as the added noise would then be insufficient to protect $\Delta$ and guarantee 
DP. 
Second (\textbf{R2}), the aggregator must not let the malicious device use 
$\encplain(pk, \Delta)$ as its input. Relatedly, the aggregator itself must not modify 
$\encplain(pk, \Delta)$ 
to $\encplain(pk, k \cdot \Delta)$, where $k$ is a scalar, using the additively homomorphic
properties of the encryption scheme. The reason is that
these changes can violate the clipping requirement 
that a device's input is bounded by $S$ (e.g., $2 \cdot \Delta$ may be larger than $S$). 
And, third (\textbf{R3}), the aggregator must ensure that the above (the malicious device or the aggregator
copying a device's input) does not happen across rounds, as recall
that \sys uses the same encryption key in multiple rounds 
    (\S\ref{s:design:setup}).

One option to satisfy these requirements is to use %
the verifiable aggregation protocol of Orchard~\cite{roth2019honeycrisp} that is based on 
summation trees. The main challenge is resource costs. 
Briefly, 
in this protocol, the aggregator arranges the ciphertexts to be aggregated 
as leaf nodes of a tree, and publishes
the nodes of the tree leading to the root node. 
For example, the leaf nodes will be 
$\encplain(pk, \Delta)$ and $\encplain(pk, n)$, and the root node
will be $\encplain(pk, \Delta) + \encplain(pk, n)$, for the toy example above.
Then, devices in the entire population inspect parts of this tree: download
a few children and their parents and check that the addition is done correctly, that the leaf
nodes haven't been modified by the aggregator, and the leaf nodes that should be included 
are indeed included. 
The problem is that Orchard requires
a device to download and check about
$3 \cdot s$ nodes of the tree~\cite{roth2019honeycrisp,roth2020orchard}, where $s$ is a configurable parameter
whose default value is six. But for 
realistic models, each node is made of 
many ciphertexts (e.g., the 1.2M parameter CNN model requires $\ell = 293$ ciphertexts), 
and 18 such nodes add to 738~MiB.

\Sys improves this protocol using two ideas. 
First, \sys observes that the entire population of devices
that must collectively check the tree is massive (e.g., $10^{9}$). 
Besides, 
although the tree has bulky nodes with many ciphertexts, the total number of nodes
is not high due to sampling (e.g., only 10,000 devices contribute updates
in a round). 
Thus, \sys moves away from one summation tree with ``bulky'' nodes, to 
$\ell$ summation trees with ``small'' nodes, 
where $\ell$ is the number of ciphertexts
comprising a device's update (e.g., $\ell = 293$ for the 1.2M parameter model). 
Then, each device %
probabilistically selects a handful of trees, and
a checks few nodes within each selected tree.

Second, \Sys optimizes how a device tests 
whether the sum of two ciphertexts equals a third ciphertext.
\Sys 
recognizes that ciphertexts can be expressed as polynomials and
the validity of their addition can be checked  efficiently using a technique called polynomial identity testing (PIT)~\cite{schwartz1980fast,zippel1979probabilistic}.
Roughly, PIT says that the sum of polynomials can be checked by evaluating them at a random point and
checking the sum of these evaluations.
Using PIT, \sys replaces the ciphertexts at the non-leaf nodes
    of the summation trees with their much smaller evaluations at a random point.

We now describe \sys's protocol in detail, first without the PIT optimization, and then with it.

\emparagraph{Incorporating finer-grained summation trees.}
\sys's protocol has three steps: commit, add, and verify (Figure~\ref{f:addphase}). 
In the \emph{commit} step, all
devices commit to their ciphertexts before submitting 
them to the aggregator (line~\ref{l:addphase:generatecommitments}--\ref{l:addphase:merkletreeforcommitments} in
Figure~\ref{f:addphase}). The aggregator 
publishes a Merkle tree of these commitments to the
bulletin board. Committing before submitting ensures that a malicious device cannot copy and submit
an honest device's input (requirement \textbf{R2} above). 
Similarly, this design ensures that the aggregator cannot change a device's input (again requirement \textbf{R2}).

In the \emph{add} step, the aggregator adds the ciphertexts via 
summation trees. Specifically, if device updates have $\ell$
ciphertexts, the aggregator creates $\ell$ summation trees, one 
per ciphertext (line~\ref{l:addphase:createsummationtrees} in Figure~\ref{f:addphase}). 
The leaf vertices of the $j$-th tree are the $j$-th ciphertexts in the devices'
inputs, while each parent is the sum of its children ciphertexts,
and the root is the $j$-th ciphertext in the aggregation result. 
The aggregator publishes the vertices of the summation trees
on the bulletin board 
(line~\ref{l:addphase:publishmerklesummationtrees} in Figure~\ref{f:addphase}), 
allowing an honest device to check that its input is 
not omitted (requirement \textbf{R1} above).

In the \emph{verify} step, each device in the system 
selects $q \cdot \ell$ summation trees,
where $q$ is the device sampling probability %
(line~\ref{l:addphase:picksummationtrees} in
Figure~\ref{f:addphase}), and checks $s$ leaf nodes and $2s$ non-leaf nodes in each tree. ($s=6$ in our implementation.)
Specifically, the 
device checks that the leaf node ciphertexts are
committed to in the commit step (requirement \textbf{R2}), and the ZK-proofs
    of the ciphertexts are valid, e.g., the first part of the plaintext message
    in the ciphertexts equals the current round number (requirement \textbf{R3}).
For the non-leafs, the device checks that they sum to their children.

\emparagraph{Incorporating PIT.}
Checking the non-leaf vertices 
is a main source of overhead for the protocol
above. 
The reason is that even though each non-leaf is a single ciphertext, this ciphertext
is large: 
for the quantum-secure AHE scheme \sys uses (\S\ref{s:impl}), 
a ciphertext is 
    131~KiB, made of 
    two polynomials of $2^{12}$ coefficients each, where
    each coefficient is 16 bytes.

As mentioned earlier, \sys reduces this overhead by using 
polynomial identity testing (PIT)~\cite{schwartz1980fast,zippel1979probabilistic}.
This test says that given a $d$-degree polynomial $g(x)$ whose coefficients are
in a field
$\mathbb{F}$, one can test whether 
 $g(x)$ is a zero polynomial by picking a number $r\in \mathbb{F}$ 
uniformly and testing whether $g(r)==0$.
This works because a $d$-degree polynomial has at most $d$ solutions to
$g(x)==0$ and $d$ is much less than $|\mathbb{F}|$.

Using PIT, \sys replaces the ciphertexts at the non-leafs
    with their evaluations at a random point $r$.
Then, during the ``Verify'' step, a device
checks 
(line~\ref{l:addphase:verifynonleaf} in Fig.~\ref{f:addphase})
whether these evaluations (rather than ciphertexts) add up.
Thus, instead of downloading three ciphertexts with $2 \cdot 2^{12}$ 
    field elements each, 
    a device downloads 2 elements of $\mathbb{F}$ per ciphertext.

A requirement for PIT 
    is generation of $r$, which must be sampled
    uniformly from the coefficient field.
For this task,
\sys extends
    the master committee to
    publish an $r$ to the bulletin board 
    in the add step, using
a known protocol
to securely and efficiently
        generate a random number~\cite{damgard2006unconditionally, damgaard2012multiparty}.
     
\subsection{Release phase}
\label{s:design:release}
During the release phase, \sys must decrypt
the $\ell$ ciphertexts from the add phase, i.e.,
 the $\ell$ root nodes of the $\ell$ summation trees.
 The default, but expensive, option is to use MPC among the members of the decryption committees. 

\Sys addresses this efficiency challenge 
using known ideas and applying them; i.e., \sys's contribution
in this phase is not new techniques, but  the observation that existing ideas can be 
applied. Nevertheless, applying these ideas requires some care and work.

First, recall that \sys has multiple decryption committees (\S\ref{s:overview:arch}). Naturally, to reduce per-device work,
each  committee decrypts a few of the $\ell$ ciphertexts. A design question for \sys is
how many committees should it use. On the one hand, more committees are desirable (best case is $\ell$). However, more committees also mean that each has to be larger to ensure that
none of them samples more than $A$ out of $C$ malicious devices, breaking the threshold assumptions of a committee.
Meanwhile, a larger committee means more  overhead. In practice (\S\ref{s:eval}),
we take a middle ground and configure \sys to use ten decryption 
committees.

Second, \sys reduces each committee's work relative to the MPC baseline, using
a fast distributed decryption protocol to decrypt the ciphertexts~\cite{chen2019efficient}.
The use of this protocol is  possible as a decryption committee's task is only 
of decryption given how we formed and assigned work to different types of committees (\S\ref{s:overview:arch}).
This fast  protocol requires the committee devices to mainly perform local computations with little interaction 
with each other. The caveat is that for this protocol to be applicable, the committee members must
know an upper bound on the number of additive homomorphic operations on the ciphertexts
they are decrypting.\footnote{This bound is needed to add a ``smudging noise'' 
to the committee's decryption output to ensure that the output does
not leak information on the inputs to the aggregation~\cite{asharov2012multiparty}.} 
Fortunately, in \sys's setting this bound is known: 
    it is the maximum number of devices whose data the aggregator adds
    in the add phase ($M_{max}$ in Figure~\ref{f:addphase}).
The benefit of distributed decryption (and moving work outside MPC) meanwhile
is substantial.

\subsection{Privacy proof}
\label{s:design:proof}
\Sys's protocol
provides the required differential privacy guarantee (\S\ref{s:problem:goals}).
The supplementary material (Appendix~\ref{s:appendix:privacy_proof})
contains a proof. But, briefly, 
the key reasons are that (i) in the generate phase, honest devices sample themselves
to make sure that they are not sampled more than expected, (ii)
the verifiable aggregation protects these devices' input, and (iii)
key resharing and fast decryption protocols keep secret keys hidden.
\section{Prototype implementation}
\label{s:impl}
\begin{figure}[t]
 {\includegraphics[width=0.4\textwidth]{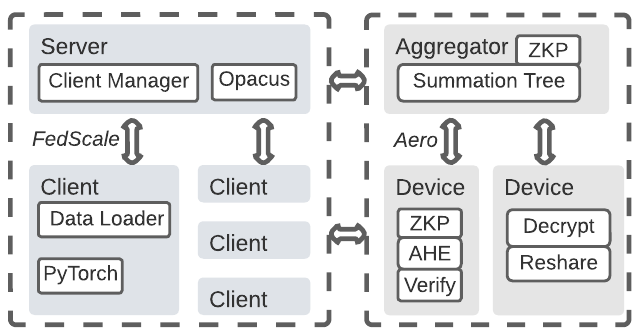}}
\caption{%
An overview of \sys's implementation.}
\label{f:impl}
\label{fig:impl}
\end{figure}

We implemented a prototype of \sys atop FedScale~\cite{fedscale-icml22}, which is a scalable 
system for federated learning capable of handling a large number of devices. By default, FedScale supports 
algorithms such as FedAvg and FedSGD 
(without differential privacy). Further, it allows a data analyst
to specify the model using the popular PyTorch framework.

\begin{figure}[t]
    \footnotesize
    \centering
    
    \begin{tabular}{
        @{}
        *{1}{>{\raggedright\arraybackslash}b{.06\textwidth}}%
        *{1}{>{\raggedleft\arraybackslash}b{.13\textwidth}}  %
        *{1}{>{\raggedleft\arraybackslash}b{.04\textwidth}}  %
        *{1}{>{\raggedleft\arraybackslash}b{.07\textwidth}}  %
        *{1}{>{\raggedleft\arraybackslash}b{.06\textwidth}}  %
        @{}
        }
        Dataset & Model & Size & FedScale & \sys\\
        \bottomrule

        FEMNIST~\cite{cohen2017emnist} & LeNet~\cite{lecun1995learning} & 49K & 75\% & 74\% \\
             & CNND~\cite{reddi2020adaptive} & 1.2M & 78\%	& 68\% \\
             & CNNF~\cite{mcmahan2017communication} & 1.7M & 79\% & 68\% \\
             & AlexNet~\cite{krizhevsky2012imagenet} & 3.9M & 78\% & 40\% \\
             
        \hline
        CIFAR10~\cite{Krizhevsky09learningmultiple} & LeNet~\cite{lecun1995learning} & 62K & 48\% & 48\% \\
             & ResNet20~\cite{he2016deep} & 272K & 59\% & 48\% \\
             & ResNet56~\cite{he2016deep} & 855K & 54\% & 35\%\\
        \hline
        Speech~\cite{warden2018speech} & MobileNetV2~\cite{howard2017mobilenets} & 2M & 57\% & 4\%\\
        \bottomrule

    \end{tabular}
    \caption{Test accuracy for different models after 480 rounds of training and differential privacy parameters $(\epsilon, \delta)$ set to (\changebars{5.03}{1.04}, $W^{-1.1}$). As shown later, increasing $\epsilon$ can recover the accuracy loss.}
    \label{fig:acc_models}
    \label{f:acc_models}
\end{figure}

\begin{figure*}[t]
\centering
\begin{subfigure}[t]{0.4\textwidth}{
\centering
\includegraphics[width=2.5in]{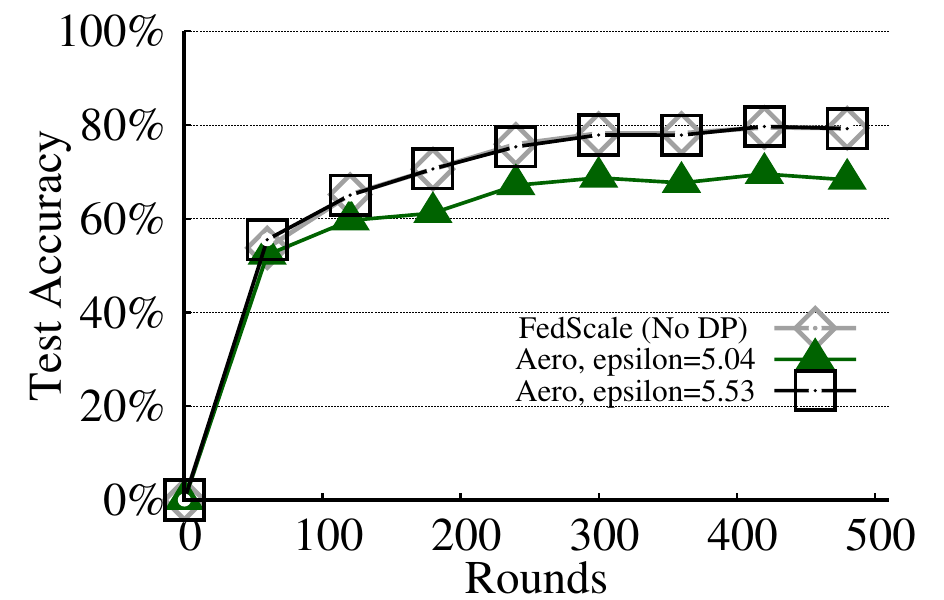}
\caption{CNNF}
\label{f:cnn_fedavg_acc}
}
\end{subfigure} %
\begin{subfigure}[t]{.4\textwidth}{
\centering
\includegraphics[width=2.5in]{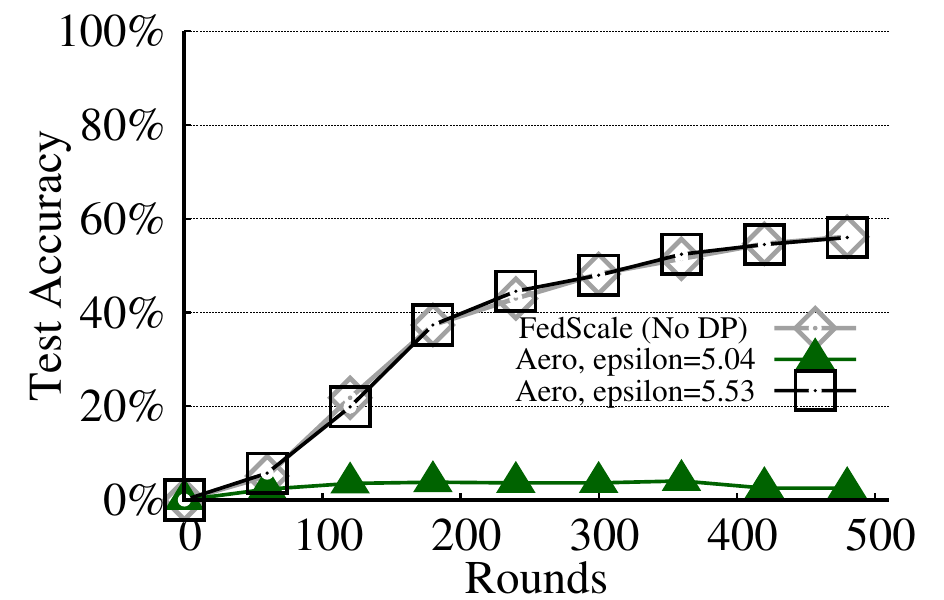}
\caption{MobileNetV2}
\label{f:mobilenet_acc}
} %
\end{subfigure}
\caption{%
Test accuracy versus rounds for \sys and FedScale for the 1.7M parameter CNNF model and the %
2M parameter MobileNetV2 model. %
} 
\label{f:acc_vs_rounds}
\label{fig:acc_vs_rounds}
\end{figure*}

Our \sys prototype extends FedScale in the following way (Figure~\ref{f:impl}).
First, it extends the programming layer of FedScale with Opacus~\cite{opacus}, which 
is a library that adjusts a PyTorch model to make it suitable for differentially
private federated learning; for instance, Opacus replaces the batch normalization
layer of a neural network with group normalization.
Second, our prototype
extends the device-side code of FedScale with additional components
needed for the various committees and phases in \sys (key resharing, Gaussian noise generation, 
verifiable aggregation, and distributed decryption; \S\ref{s:design}).
FedScale is written in Python while the code we added is in Rust; thus, 
we use PyO3 to wrap the Rust code with Python interfaces.
Third, our prototype extends the FedScale server-side code
with \sys's aggregator code and the code to coordinate
the various phases of \sys. 
In total, we added $\approx$4,300 lines of Rust
to FedScale.

Our prototype configures the cryptographic primitives for 128-bit security.
For additively homomorphic encryption, we use the BFV encryption scheme. We
set the polynomial degree in BFV to $2^{12}$ and use the default parameters
from Microsoft SEAL~\cite{seal}. 
For ZK-proofs, we use  ark\_groth16~\cite{arkgroth16}, which implements the
zkSNARK of Jens Groth~\cite{groth16on}.
\section{Evaluation}
\label{s:eval}

We evaluate \sys in two parts.
First, we compare it with plain federated learning, specifically, the FedScale system. This comparison
sheds light on the cost of privacy both in terms of model accuracy and resource consumption on the devices. 
Second, we compare \sys to Orchard, which is the state-of-the-art system 
for training models in a federated manner in the same threat model as \sys. 
This comparison helps understand the effectiveness of \sys's techniques in 
reducing overhead.
Our main  results are the following:

\begin{myitemize}

    \item \sys can train models with comparable  accuracy to
    FedScale (plain federated learning). For instance, for a CNN model over the FEMNIST dataset, 
    \sys produces a model with 79.2\% accuracy with DP parameter \changebars{$\epsilon = 5.53$}{$\epsilon = 1.34$},
    relative to 79.3\%  in FedScale, after 
    480 rounds of training.

        \item \Sys's  \cpu and network overhead is low to moderate: for a 1.2M parameter model, devices spend  15~ms of \cpu and 3.12~KiBs of network transfers most of the time, and occasionally (with a probability of $10^{-5}$ in a round) 13.4 min. of processor
    time and 234~MiBs of network transfers.

    \item 
    \sys's techniques improve over Orchard by up to $2.3\cdot 10^5\times$. %

\end{myitemize}

\noindent \emph{Testbed.}
Our testbed has machines of type 
\texttt{c5.24xlarge} on Amazon
EC2. Each machine has 
    96v\cpus, 192 GiB RAM, and
25 Gbps network bandwidth. 
We use a single machine for running \sys's server.
Meanwhile, 
we co-locate multiple devices on a machine: each device
    is assigned six \cpus given that modern mobiles
    have processors with four to eight \cpus.

\emph{Default system configuration.} 
Unless specified otherwise, we configure the systems to assume
$W=10^9$ total devices. 
For \sys, we set the default 
device sampling probability $q$ in DP-FedAvg to $10^{-5}$; i.e.,
the expected number of devices that contribute updates in a round is
$10^{4}$. We also configure \sys to use
ten decryption committees, where 
each committee has a total of $C = 45$ devices of which $A = 18$ 
may be malicious. %
The first decryption committee also serves as the 
master committee. We configure the DP-noise committee 
with $(A, C) = (40, 280)$. For Orchard, we configure
its committee to have 40 devices of which 16 may be malicious.\footnote{\sys's committees are larger because it must ensure, using a union bound, that
the chance of sampling more than $A$ malicious devices
 across \emph{any} of its committees is the same as 
in Orchard.}

\subsection{Comparison with FedScale}
\label{s:eval:FedScale}

\emparagraph{Accuracy.}
We evaluate several datasets and models to compare \sys with FedScale, specifically, the FedAvg algorithm
in FedScale. Figure~\ref{f:acc_models}
shows these datasets and models. \changebars{We use CNND and CNNF for two 
different CNN models: one dropout model~\cite{reddi2020adaptive} and the other from the
FedAvg paper~\cite{mcmahan2017communication}}{We use CNND and CNNF to represent two 
different CNN models, where the former is a dropout model~\cite{reddi2020adaptive} and the latter is from the 
FedAvg paper~\cite{mcmahan2017communication}}.

\Sys's accuracy depends on the DP parameters $\epsilon$ and $\delta$.
For Figure~\ref{f:acc_models} experiments, we set \changebars{$\epsilon = 5.04$}{$\epsilon = 1.04$} and 
$\delta = 1/W^{1.1}$. 
Further, for both systems, we set all other training parameters (batch size, 
the number of device-side training epochs, etc.) 
per the examples provided by FedScale for each dataset.

Figure~\ref{f:acc_models} compares the accuracies after 480 rounds of 
training (these models converge in roughly 400-500 rounds). Generally, \sys's accuracy
loss grows with the number of model parameters. The reason is that 
DP-FedAvg adds noise for every parameter and thus the norm of the noise increases
with the number of parameters.

Although \sys's accuracy loss is (very) high for a larger number of parameters, 
this loss is recoverable by increasing $\epsilon$ (but still keeping
it at a recommended value). 
Figure~\ref{f:acc_vs_rounds} shows accuracy for two  values of $\epsilon$ for two example
models. Increasing $\epsilon$ from \changebars{5.04 to 5.53}{1.04 to 1.34}
recovers the accuracy loss. 
For instance, for the CNNF model, FedScale's accuracy is
79.3\% after 480 rounds, while \sys's is 
79.2\%.
The reason is that as $\epsilon$
increases, more devices can contribute updates ($q$ increases), which increases
the signal relative to the differential privacy noise. Overall, \sys can give
competitive accuracy as plain federated learning for models with parameters
ranging from tens of thousand to a few million.

\begin{figure}[t]
    \footnotesize
    \centering
    
    \begin{tabular}{
        @{}
        *{1}{>{\raggedright\arraybackslash}b{.04\textwidth}}%
        *{1}{>{\raggedleft\arraybackslash}b{.04\textwidth}}  %
        *{1}{>{\raggedleft\arraybackslash}b{.06\textwidth}}  %
        *{1}{>{\raggedleft\arraybackslash}b{.04\textwidth}}  %
        *{1}{>{\raggedleft\arraybackslash}b{.06\textwidth}}  %
        *{1}{>{\raggedleft\arraybackslash}b{.04\textwidth}}  %
        @{}
        }
        &&\multicolumn{2}{c}{\cpu (ms)} &\multicolumn{2}{c}{network (KiB)}\\
        \hline
        Model & Size & FedScale & \sys & FedScale  & \sys \\
        \bottomrule

        LeNet & 49K  & 3.36E-4 & 2 & 3.93E-5  & 0.96  \\
        \hline
        CNND & 1.2M  & 9.50E-4 & 55 & 9.66E-4  & 3.87  \\
        \hline
        CNNF & 1.7M& 9.49E-4 & 77 & 1.35E-3  & 5.15  \\
        \hline
        AlexNet & 3.9M & 1.75E-3 & 170 & 3.12E-3  & 11.0  \\
        
        \bottomrule

    \end{tabular}
    \caption{Per device per round average cost for different models.}
    \label{fig:cost_vs_models}
    \label{f:cost_vs_models}
\end{figure}

\begin{figure*}[t]
\begin{subfigure}[t]{0.33\textwidth}{
\centering
\includegraphics[width=2.25in]{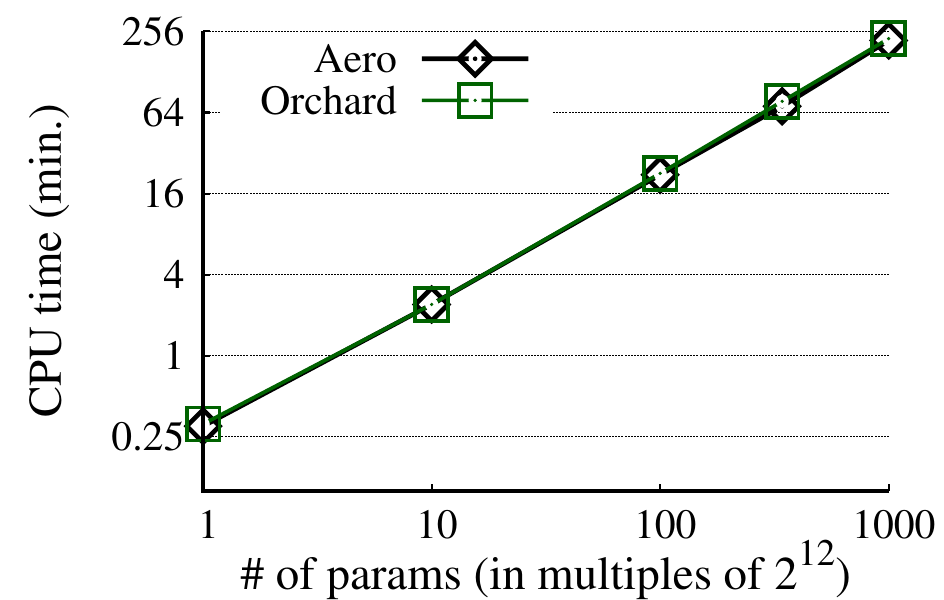}
\caption{Generator}
\label{f:generator_cpu}
}
\end{subfigure} 
\begin{subfigure}[t]{.33\textwidth}{
\centering
\includegraphics[width=2.25in]{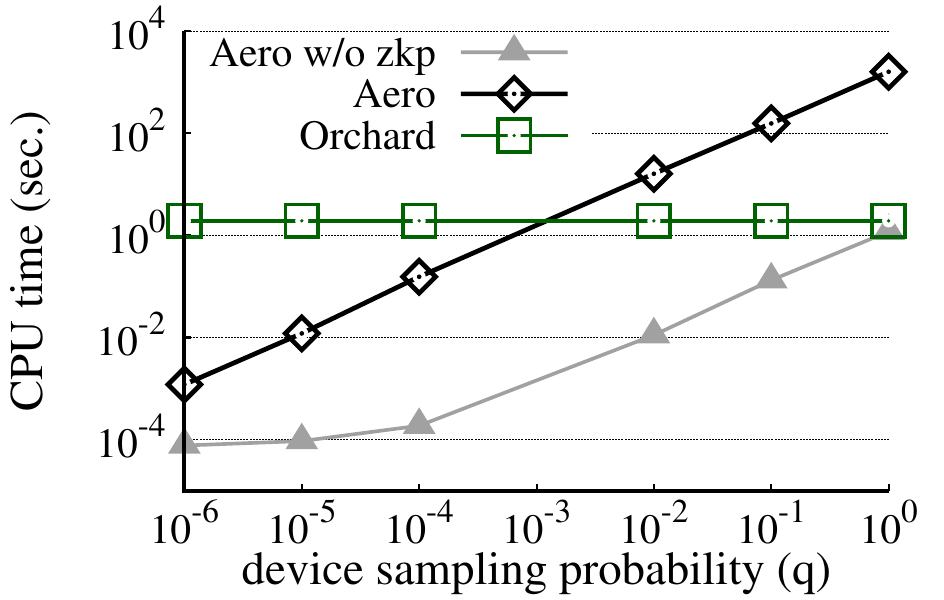}
\caption{Verifier}
\label{f:verifer_cpu}
} 
\end{subfigure}
\begin{subfigure}[t]{.33\textwidth}{
\centering
\includegraphics[width=2.25in]{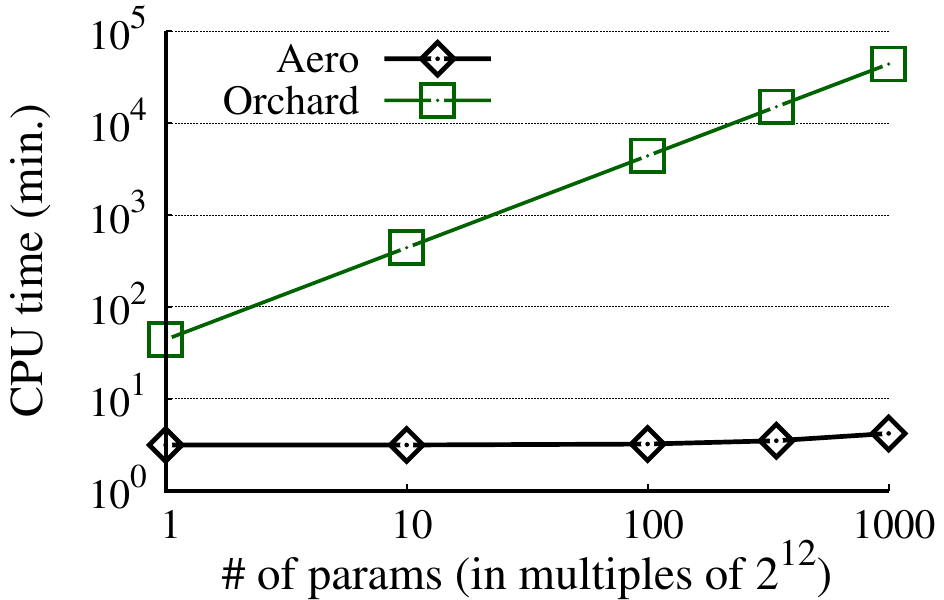}
\caption{Decryption committee}
\label{f:committee_cpu}
}
\end{subfigure}
\caption{%
\cpu time per device per round of training
    for different device roles in \sys and Orchard.
}
\label{f:eval3-micro-cpu}
\label{fig:eval3-micro-cpu}
\end{figure*}

\begin{figure*}[ht]
\centering
\begin{subfigure}[t]{0.33\textwidth}{
\centering
\includegraphics[width=2.25in]{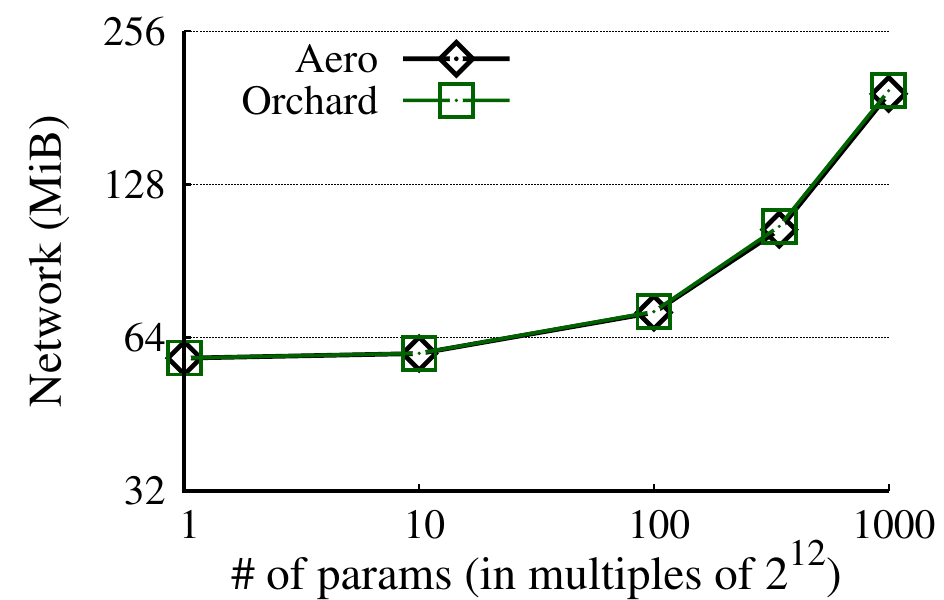}
\caption{Generator}
\label{f:generator_net}
}
\end{subfigure} %
\begin{subfigure}[t]{.33\textwidth}{
\centering
\includegraphics[width=2.25in]{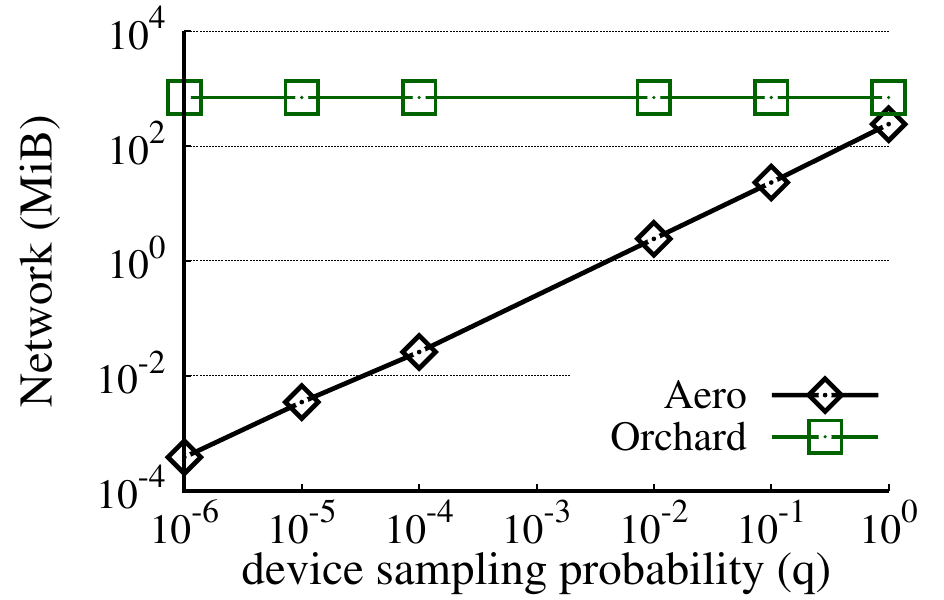}
\caption{Verifier}
\label{f:verifier_net}
} %
\end{subfigure}
\begin{subfigure}[t]{.33\textwidth}{
\centering
\includegraphics[width=2.25in]{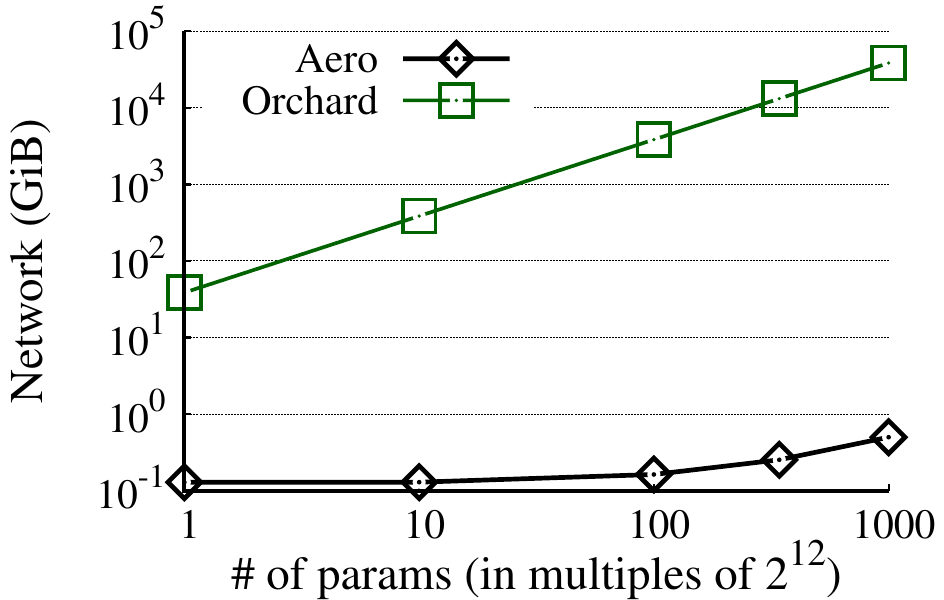}
\caption{Decryption committee}
\label{f:committee_net}
}
\end{subfigure}
\caption{%
Network transfers per device per round of training
    for different device roles in \sys and Orchard.
    }
\label{f:eval3-micro-net}
\label{fig:eval3-micro-net}
\end{figure*}

\emparagraph{Device overhead.}
Another cost of privacy relative to plain federated learning is increased
device overhead. Figure~\ref{f:cost_vs_models} summarizes the average \cpu and network cost per round
per device for the four models on the FEMNIST dataset. (We picked the FEMNIST dataset just as an example, but the results for the other datasets are qualitatively the same.)

Overall, an \sys device on average (considering 
the different types of \sys devices) spends $5.9\cdot 10^3-9.7\cdot 10^4\times$ higher \cpu
and $3.5\cdot10^3-2.4\cdot10^4\times$ higher network relative to FedScale.
This overhead
is due to the fact that FedScale does not use any cryptographic operations, while \sys devices
use many, for example, encryption and ZK-proofs during the generate phase, and 
verifiable aggregation during the add phase. However, \sys's overhead, at least, 
on average, is low (Figure~\ref{f:cost_vs_models}). Further, as we will show next, %
\sys's worst-case overhead is also moderate.

\subsection{Comparison to Orchard}
\label{s:eval:Orchard}
Both \sys and Orchard have multiple types of devices. 
\Sys has devices that participate in the master committee,
generate updates (or Gaussian noise),
verify the aggregator's work, and  
participate in the decryption committee. Similarly, Orchard
has generator, verifier, and committee devices.
We compare overhead for these devices separately.

\emparagraph{Generator device overhead.}
The overhead for the generators
changes only with the model size (after excluding the training time to 
generate the plain updates).
Thus, we vary the number
    of model parameters and 
    report overhead. 

Figure~\ref{f:generator_cpu} shows the \cpu time 
and Figure~\ref{f:generator_net} shows the network transfers
with a varying number of model parameters. 
These overheads grow linearly with the number of model parameters (the network
overhead is not a straight line as it includes a fixed cost of 60~MiB to download ZK-proof proving keys).%
The reason is that the dominant operations for a generator device
are generating ZK-proofs and shipping 
ciphertexts to the aggregator. The number of both operations is proportional to the number of parameters (\S\ref{s:design:generate}). %

In terms of absolute overhead, 
a specific data point of interest is a million-parameter model, e.g., the
CNND model with 1.2M parameters.
    For this size,
    a generator device spends 1.01 hours in \cpu time, or 
    equivalently 13.4 minutes of latency (wall-clock time) over six cores.
The generator also sends 101~MiB of data over the network. 
These overheads are moderate, considering the fact that the probability that a device will be a generator in a round is small: $10^{-5}$.

Finally, the \cpu and network overhead for \sys and Orchard 
    is roughly the same.%
The reason is that the dominant operations for the two systems
are common: ZK-proofs and upload of ciphertexts.

\emparagraph{Verifier device overhead.}
Figure~\ref{f:verifer_cpu} shows \cpu
and Figure~\ref{f:verifier_net} shows network
overhead for the verifier devices 
that participate in the verifiable aggregation protocol (\S\ref{s:design:add}). 
These experiments
fix the number of model parameters to 1.2M and 
vary the probability $q$ with which a verifier device samples
summation trees to inspect (recall that a verifier device
in \sys checks $q \cdot \ell$ summation trees). For Orchard,
overhead does not change with $q$ ($q$ is effectively 1).

Overall, \sys's verifier devices, which are the bulk of the devices in 
the system, are efficient consuming a few milliseconds 
of \cpu and a few KiBs of network transfers. 
    For instance, for $q = 10^{-5}$, \sys incurs 3.12~KiB in network and 15~ms of \cpu time, while Orchard incurs 1.96 seconds (130$\times$)  and 738~MiB ($2.36\cdot 10^5\times$).

Comparing \sys with Orchard, a verifier in \sys 
    consumes lower \cpu than Orchard for smaller values of 
    $q$ but a  higher \cpu for larger $q$.
This trend is due to constants: 
    even though an \sys device
    checks $q \cdot \ell$ summation trees and $3s$ ciphertexts
    in each tree
    versus 
    $\ell \cdot 3s$ ciphertexts in Orchard,
    \sys devices verify the ZK-proofs to address the 
    reuse-of-keys issue,
    while Orchard 
    does not have such a requirement (\S\ref{s:design:add}).
Each proof check 
    takes $\approx$700~ms on a single \cpu of 
    \texttt{c5.24xlarge}. 
Indeed, \sys w/o ZK-proof check (another line in the plot)
    is strictly better than Orchard.

\Sys's network
overhead
    increases linearly with $q$, while
Orchard's stays constant as it does not do sampling (Figure~\ref{f:verifier_net}).
Notably, when $q = 1$, i.e., when
    \sys and Orchard check the same number of ciphertexts,
    a \sys verifier consumes 
    251 MB, which is $\approx1/3$rd of Orchard.
This is \changebars{because polynomial identity testing}{due to the use of
the polynomial identity testing which} allows a
    \sys verifier to download evaluations of ciphertext polynomials rather
    than the full polynomials from non-leaf vertices 
    \changebars{}{of the summation trees }(\S\ref{s:design:add}).

\emparagraph{Committee device overhead.}
Figure~\ref{f:committee_cpu} and Figure~\ref{f:committee_net} 
show the \cpu and network overhead of decryption committee devices as a function of 
the model size. (In \sys, the first decryption committee also serves as the master committee.)

\Sys's overheads are much lower than Orchard's---for 1.2M parameters, %
    \cpu time
    is 206~s in \sys versus 214 hours in Orchard (i.e., $3751\times$ lower),
    and network 
    is 234~MiB in \sys versus 11~TiB in Orchard (i.e., $4.8 \cdot 10^{4}\times$ lower).
This improvement is for two reasons. 
    First, \sys divides the decryption of multiple ciphertexts across
committees, and thus each performs less work.
Second, \sys uses the distributed decryption protocol
(\S\ref{s:design:release}), while Orchard
    uses the general-purpose SCALE-MAMBA MPC~\cite{scaleMamba}.

\section{Related work}
\label{s:relwork}

\Sys's goal is to add the rigorous guarantee of 
differential privacy to federated learning---at low device overhead. 
This section compares \sys to prior work with similar goals.

\emparagraph{Local differential privacy (LDP).}  In LDP,
devices \emph{locally} add noise to their updates
    before submitting them
    for aggregation~\cite{duchi2013local, erlingsson2014rappor, ijcai2021-217, he2020secure,
pathak2010multiparty, truex2020ldp,seif2020wireless,
bhowmick2018protection, hao2019towards, sun2020federated, grafberger2021fedless, 
nguyen2016collecting, wang2019collecting, niu2019secure, lu2019blockchain, chen2018machine, mugunthan2020blockflow,chen2020practical,ding2021differentially}.
On the plus side, the privacy guarantee in LDP
    does not depend on the behavior of the aggregator, as devices add noises locally.
Further, LDP is scalable as it adds small device-side overhead 
    relative to 
    plain federated learning.
However, on the negative side, since each device perturbs its update, the trained model
can have a large error.

\emparagraph{Central differential privacy (CDP).}
Given the accuracy loss in LDP, many systems
    target CDP~\cite{froelicher2017unlynx,
    zeng2022aggregating,
    sebert2022protecting,
truex2019hybrid,
xu2019hybridalpha,
chase2017private,
rastogi2010differentially,
stevens2021efficient,
hynes2018efficient,
roth2019honeycrisp,
roth2020orchard,
xu2022detrust}.
The core challenge is of hiding
     sensitive device updates from the aggregator. 

Several systems in this category
 target a setting
of a few tens of devices to a few 
thousand devices~\cite{xu2022detrust,xu2019hybridalpha, truex2019hybrid, stevens2021efficient, sebert2022protecting}. 
These systems require \emph{all} 
devices to participate in one or more cryptographic primitives, and
thus their overhead grows with the number of devices. 
For example, in secure aggregation based FLDP~\cite{stevens2021efficient}, 
    each device generates a secret key, 
then masks its update using the key, before
sending the masked update to the aggregator. Then,
the devices 
securely sum  their masks to subtract them from 
the aggregator's result. This latter protocol 
    requires each device to secretly share its mask with all others. %

Chase et al.~\cite{chase2017private} do not require their protocol
to scale with the number of devices: two
devices aggregate updates from all others
before generating and adding DP noise via Yao's MPC protocol~\cite{yao1982protocols}.
The issue is that if the adversary
compromises the two devices, it learns the updates.

Honeycrisp~\cite{roth2019honeycrisp}, Orchard~\cite{roth2020orchard}, 
    and Mycelium~\cite{roth2021mycelium} target a setting
of a billion devices. One of their key insights is to run 
    expensive cryptographic protocols among
a small, randomly-sampled committee, while leveraging 
an untrusted resourceful aggregator to help with the aggregation.
Among the three systems, Orchard supports learning tasks, while Honeycrisp
supports aggregate statistics and Mycelium supports graph analytics. The
limitation of Orchard is that it imposes a large overhead on the devices 
    (\S\ref{s:problem:solutions}, \S\ref{s:eval}). 
    \Sys improves over Orchard by several orders
of magnitude (\S\ref{s:eval}).

An alternative to cryptography is to use trusted hardware, e.g.,
Intel SGX~\cite{hynes2018efficient}. These systems add negligible overhead over plain federated learning, 
    but trusting the hardware design and  
    manufacturer is a strong assumption~\cite{fei2021security,TrustZoneAttacks,ArmSEVAttacks}.

\emparagraph{No differential privacy.}
Many systems 
    provide a weaker notion of 
    privacy 
    than differential privacy, for functionality such as 
    federated machine
learning~\cite{aono2017privacy,rathee2022elsa, dong2020eastfly, fu2020vfl, jiang2020federated,
jiang2021flashe,
liu2019secure, ma2021privacy, mandal2019privfl, 254465, sav2020poseidon, xu2022hercules,
beguier2020efficient, chen2021ppt, chowdhury2021eiffel, ergun2021sparsified,
fereidooni2021safelearn, guo2020secure, hao2021efficient, kadhe2020fastsecagg, 
li2021secure, liu2020boosting, so2021turbo, xu2019verifynet, 
zhang2021dubhe, mo2021ppfl, hashemi2021byzantine, quoc2021secfl,sav2022privacy},
statistics~\cite{corrigan2017prio}, and
    aggregation~\cite{bell2020secure,bonawitz2017practical,liu2022dhsa,wan2022information,liu2022efficient}.
For instance, BatchCrypt~\cite{254465} 
    uses Paillier AHE~\cite{damgaard2001generalisation} 
    to hide updates from the aggregator. 
The promise is that
    the adversary learns only the aggregate of the data of many devices.
The fundamental issue is that aggregation does not provide
    a rigorous guarantee: one can learn individual training data from the trained model parameters~\cite{zhu2019deep,melis2019exploiting,briland2017deep,shokri2017membership}.

\section{Summary}
\label{s:summary}

Federated learning over a large number of mobile devices
is getting significant attention both in industry and academia.
One big challenge of current practical systems, those that provide good accuracy and efficiency,
is the trust
they require: the data analyst must say ``let's trust that the 
server will not be compromised''.
\Sys adds an alternative. It shows that one can perform 
FL with good accuracy, moderate overhead, and the rigorous  guarantee
of differential privacy without trusting a central server
or the data analyst. \Sys improves the trade-off
by focusing on a specific type of learning algorithms and tuning
system architecture and design to these algorithms (\S\ref{s:design}).
The main evaluation highlight is that \sys has comparable accuracy to plain
federated learning, and improves over prior work Orchard
that has strong guarantees by five orders of magnitude (\S\ref{s:eval}).

\frenchspacing

\begin{flushleft}
\setlength{\parskip}{0pt}
\setlength{\itemsep}{0pt}
\bibliographystyle{abbrv}
\bibliography{conferences-long-with-abbr2,paper}
\end{flushleft}

\appendix
\raggedend
\clearpage
\section{Supplementary material}
\newtheorem{claim}{Claim}[section]
\newtheorem{assumption}{Assumption}[section]
\subsection{Privacy proof}
\label{s:appendix:privacy_proof}
The goal of \sys is to provide differential privacy for a class of federated learning algorithms. We will take DP-FedAvg as an example and prove that \sys indeed meets its goal in multiple steps. The proof for other algorithms such as DP-FedSGD is similar. The outline of the proof is as follows.

First, we will introduce a slightly modified version of DP-FedAvg
that makes explicit the behavior of the malicious aggregator and devices. 
This modified version changes line~\ref{l:dpfedavg:sampleclients} and line~\ref{l:dpfedavg:aggregateupdates}
in Figure~\ref{f:dpfedavg}.
For instance, we will modify line~\ref{l:dpfedavg:sampleclients} in Figure~\ref{f:dpfedavg}
to show that a byzantine aggregator may allow malicious devices
to be sampled in a round. Appendix~\ref{sssec:dp} introduces the modified version and 
shows that the changes do not impact DP-FedAvg's differential privacy guarantee.

Next, we will prove that \sys executes the modified
DP-FedAvg algorithm faithfully, by showing that enough Gaussian noise will be added (Appendix~\ref{sssec:noise}) and if an adversary introduces an
error into the aggregation, it will be caught with high probability (Appendix~\ref{sssec:agg}).

Finally, we will prove that after aggregation, \sys's
decryption protocol does not leak any information beyond the allowed output
of DP-FedAvg (Appendix~\ref{sssec:decrypt}).

We will not cover security of the protocols used in the setup phase, e.g., the sortition protocol
to form committees, because \sys does not 
innovate on these protocols. With the above proof structure, we will show \sys provides the differential privacy guarantee to 
honest devices' data.

Before proceeding to the proof, we introduce a few definitions. In the main body of the paper, for
simplicity, we did not distinguish between honest-but-offline and malicious
devices for the DP-noise committee (\S\ref{s:design:generate}). Instead, we considered all offline devices as
malicious for this committee, since it's not possible to tell whether an offline device is malicious or not.

However, for devices that generate updates, we do protect honest-but-offline devices' data. So when
talking about generator devices, we use the following terms: 
\begin{itemize}
    \item honest-and-online devices,
    \item honest-but-offline devices,
    \item honest devices: including both honest-and-online and honest-but-offline devices
    \item malicious devices %
\end{itemize}
As for the DP-noise committee members, we still follow the previous definition, namely
\begin{itemize}
    \item honest members: honest and online members
    \item malicious members: malicious or honest-but-offline members
\end{itemize}

\subsubsection{DP-FedAvg}
\label{sssec:dp}
As mentioned above, \sys must take into account the behavior of malicious entities for
the computation in line~\ref{l:dpfedavg:sampleclients} and line~\ref{l:dpfedavg:aggregateupdates} of 
Figure~\ref{fig:dpfedavg}.

There are two reasons, \sys must modify line~\ref{l:dpfedavg:sampleclients}: first, a
malicious aggregator may filter out honest-but-offline devices' data in the
aggregation; second, a malicious aggregator may add malicious devices' data in
the aggregation.

To capture the power of a malicious aggregator, we change line~\ref{l:dpfedavg:sampleclients} to be

\textit{$C^t\leftarrow$ subset of (sampled honest users with probability q) + (some
malicious devices)}

We must modify line~\ref{l:dpfedavg:aggregateupdates} because the DP-noise committee is
likely to add more noise as noted in the design of the generate phase (\S\ref{s:design:generate}). 
To capture this additional noise, we
change line~\ref{l:dpfedavg:aggregateupdates} to be

\textit{$\Delta^t\leftarrow \sum_k \Delta_k^t + \mathcal{N}(0,I\sigma^2)$ + some
additional bounded noise}

In the remainder of this section, we will prove that the modifications
preserve the privacy guarantee of the original
DP-FedAvg algorithm~\cite{brendan2018learning}.

\emparagraph{Device sampling (line~\ref{l:dpfedavg:sampleclients}).} For device sampling, there are two possible attacks:
either some malicious devices' data will be included or some honest devices'
data will be filtered out.

The first case is not a problem, since it's equivalent to post-processing: for
example, after aggregation, the malicious aggregator can add  malicious
updates. Post-processing of a deferentially private result does not 
affect the differential privacy guarantee (this follows from the post-processing lemma
in the differential privacy literature~\cite{zhu2021bias,dwork2006calibrating}).

To prove that filtering out honest devices' data will not impact the privacy guarantee, 
we'll first give intuition and then a
rigorous proof. Intuitively, a larger sampling probability (larger $q$) leads to more privacy
loss, because a device's data is more likely to be used in training. So
informally, if each device is expected to contribute an update fewer times, the
privacy loss is expected to be less. Now coming back to the case where the aggregator
filters out honest devices' data, it is obvious that each device is expected to
contribute updates no more frequently than without filtering, which means the
privacy loss is expected to be no more than without filtering.

In the original paper of DP-FedAvg~\cite{brendan2018learning}, the DP guarantee relies on the moments accountant introduced by Abadi et al.~\cite{abadi2016deep}, whose tail
bound can be converted to $(\epsilon, \delta)$-differential
privacy. To be precise, the proof uses a lemma (which we will introduce shortly) that
gives the moments bound on Gaussian noise with random sampling, which is
equivalent to $(\epsilon, \delta)$-differential privacy. We cite this lemma as Lemma~\ref{lm:moment} in the appendix.

So next we will show by replacing the original random sampling with random
sampling plus filtering, the moment bound still holds.
In the discussion, without losing generality, we will focus on functions whose sensitivity is 1, for example, 
the $\textsc{UserUpdate}$ function that 
does local training and gradient clipping with $S=1$ in DP-FedAvg (Figure~\ref{f:dpfedavg}).
Notice that for any function $f'$ whose sensitivity is $S' \neq 1$, we can always construct a 
function $f=f'/S'$ whose sensitivity is 1.

\begin{lemma}\label{lm:moment}
Given any function $f$, whose norm $\|f(\cdot)\|_2 \le 1$, let
$z\geq 1$ be some noise scale and $\sigma=z\cdot \|f(\cdot)\|_2$, let $d=\{d_1,...,d_n\}$ be a database, 
let $\mathcal{J}$ be a sample from $[n]$ where each $i\in [n]$ is
chosen independently with probability $q\le \frac{1}{16\sigma}$, then for any
positive integer $\lambda\le \sigma^2 \ln\frac{1}{q\sigma}$, the function
$\mathcal{G}(d)=\sum_{i\in \mathcal{J}} f(d_i) + \mathcal{N}(0,\sigma^2 \boldsymbol{I})$
satisfies $$\alpha_{\mathcal{G}}(\lambda)\leq
\frac{q^2\lambda(\lambda+1)}{(1-q)\sigma^2}+O(q^3\lambda^2/\sigma^3).$$
\end{lemma}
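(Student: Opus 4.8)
The plan is to follow the moments-accountant argument of Abadi et al.~\cite{abadi2016deep}, of which this statement is essentially their subsampled-Gaussian lemma. Recall that $\alpha_{\mathcal{G}}(\lambda)$ is the logarithm of the $\lambda$-th moment of the privacy-loss random variable, maximized over neighboring databases. The first step is a reduction to one dimension: because $f$ has $\ell_2$-sensitivity $1$ and the additive noise $\mathcal{N}(0,\sigma^2\boldsymbol{I})$ is rotationally invariant, only the output component along the direction $f(d_i)-f(d_i')$ of the differing data point matters, and the worst-case neighboring pair reduces the problem to comparing the two one-dimensional laws $\mu_0=\mathcal{N}(0,\sigma^2)$ and the subsampled mixture $\mu=(1-q)\mathcal{N}(0,\sigma^2)+q\,\mathcal{N}(1,\sigma^2)$. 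The log moment is then $\log\max\{\,E_{\mu_0}[(\mu/\mu_0)^\lambda],\,E_{\mu}[(\mu_0/\mu)^\lambda]\,\}$, so it suffices to bound each expectation and take the larger.

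Next I would expand binomially. Writing $\mu/\mu_0 = 1 + q(\mu_1/\mu_0 - 1)$ with $\mu_1=\mathcal{N}(1,\sigma^2)$, the binomial theorem gives
\[
E_{\mu_0}\!\left[(\mu/\mu_0)^\lambda\right] = \sum_{k=0}^{\lambda}\binom{\lambda}{k} q^k\, E_{\mu_0}\!\left[(\mu_1/\mu_0 - 1)^k\right].
\]
The $k=0$ term equals $1$; the $k=1$ term vanishes since $\int(\mu_1-\mu_0)=0$; and the $k=2$ term supplies the leading contribution, because $E_{\mu_0}[(\mu_1/\mu_0-1)^2]=\int (\mu_1-\mu_0)^2/\mu_0 = e^{1/\sigma^2}-1=\Theta(1/\sigma^2)$. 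Running the same expansion for the reciprocal direction $E_{\mu}[(\mu_0/\mu)^\lambda]$ and keeping the worse of the two yields the claimed leading term $q^2\lambda(\lambda+1)/((1-q)\sigma^2)$, where the $(1-q)$ denominator arises from the mixture weight on $\mu_0$ inside the second direction.

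The crux is controlling the tail $k\ge 3$ and showing it is $O(q^3\lambda^2/\sigma^3)$. For this I would derive explicit Gaussian moment bounds for $E_{\mu_0}[\,|\mu_1/\mu_0-1|^k\,]$ (these grow geometrically after completing the square in the two densities) and then invoke both hypotheses: $q\le \tfrac{1}{16\sigma}$ forces each successive summand to be a constant factor smaller than its predecessor, while $\lambda\le\sigma^2\ln\tfrac{1}{q\sigma}$ keeps the binomial factors $\binom{\lambda}{k}q^k$ from growing too quickly in $k$, so the series is dominated by its $k=3$ term and sums to the stated cubic-order remainder. I expect this tail estimate to be the main obstacle: it is the only place where the two somewhat delicate numerical constraints on $q$ and $\lambda$ are actually consumed, and pinning down the precise growth rate of the higher moments (rather than merely their finiteness) is exactly what fixes the $q^3$ order of the error. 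The passage from this log-moment bound back to $(\epsilon,\delta)$-DP is then the standard tail-to-DP conversion already used in the surrounding discussion, and is not needed to establish the lemma itself.
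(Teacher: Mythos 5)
The paper does not actually prove this lemma: it imports it verbatim from Abadi et al.~\cite{abadi2016deep} (their subsampled-Gaussian moments bound) and simply refers the reader there. Your outline faithfully reproduces the structure of that source's proof---the one-dimensional reduction to $\mu_0=\mathcal{N}(0,\sigma^2)$ versus the mixture $(1-q)\mathcal{N}(0,\sigma^2)+q\,\mathcal{N}(1,\sigma^2)$, the binomial expansion with the vanishing $k=1$ term and the $k=2$ term giving the leading bound, and the tail control consuming the hypotheses on $q$ and $\lambda$---so it is the same approach; the only caveat is that the $k\ge 3$ tail estimate, which you correctly flag as the delicate step, is described in plan form rather than carried out.
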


Notice that if the bound on $\alpha_{\mathcal{G}}$ does not change when some data points are filtered out after sampling, the differential privacy guarantee will not change either. We refer readers to \cite{abadi2016deep} for more details.

\begin{claim}
Let $\mathcal{J}$ be a sample from $[n]$ where each $i\in [n]$ is chosen independently
with probability $q\le \frac{1}{16\sigma}$. Let $\mathcal{J}'\subseteq \mathcal{J}$ be some arbitrary subset of $\mathcal{J}$.
Then for any positive integer
$\lambda\le \sigma^2 \ln\frac{1}{q\sigma}$, the function
$\mathcal{G'}(d)=\sum_{i\in \mathcal{J}'} f(d_i) + \mathcal{N}(0,\sigma^2 \boldsymbol{I})$
also satisfies $$\alpha_{\mathcal{G'}}(\lambda)\leq
\frac{q^2\lambda(\lambda+1)}{(1-q)\sigma^2}+O(q^3\lambda^2/\sigma^3).$$
\end{claim}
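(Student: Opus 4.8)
The plan is to deduce the Claim from Lemma~\ref{lm:moment} by showing that discarding data points \emph{after} sampling can only lower the effective probability with which the protected record is included, and that the bound in Lemma~\ref{lm:moment} is monotonically increasing in that probability. Write $B(q,\lambda,\sigma)=\frac{q^2\lambda(\lambda+1)}{(1-q)\sigma^2}+O(q^3\lambda^2/\sigma^3)$ for the right-hand side of Lemma~\ref{lm:moment}. Differential privacy is a statement about a single record, so I fix the record $n$ in which the two adjacent databases differ and track only how $n$ enters the output.

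First I would observe that, because $\mathcal{J}'\subseteq\mathcal{J}$, the probability that record $n$ survives into $\mathcal{J}'$ is $p_n:=\Pr[n\in\mathcal{J}']\le\Pr[n\in\mathcal{J}]=q$. Next I would reduce the moment of $\mathcal{G}'$ to the one-dimensional sampled-Gaussian quantity that the proof of Lemma~\ref{lm:moment} already bounds: conditioning on the contributions of all records other than $n$ (their sum $V$), the output is $V+f(d_n)\,\mathbf{1}[n\in\mathcal{J}']+\mathcal{N}(0,\sigma^2\boldsymbol{I})$ on the database containing $n$ and $V+\mathcal{N}(0,\sigma^2\boldsymbol{I})$ on the one without it. By translation invariance of the Gaussian, shifting by $-V$ shows that the conditional privacy-loss distribution does not depend on $V$ and coincides with that of the one-dimensional mechanism that includes a unit-sensitivity record with probability $p_n$. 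Hence $\alpha_{\mathcal{G}'}(\lambda)$ equals the one-dimensional moment at inclusion probability $p_n$, which is exactly what Lemma~\ref{lm:moment} bounds with $q$ replaced by $p_n$; thus $\alpha_{\mathcal{G}'}(\lambda)\le B(p_n,\lambda,\sigma)$. The side conditions transfer for free, since $p_n\le q\le\tfrac{1}{16\sigma}$ and $\lambda\le\sigma^2\ln\tfrac{1}{q\sigma}\le\sigma^2\ln\tfrac{1}{p_n\sigma}$.

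To finish I would check that $B$ is increasing in its first argument: $\tfrac{d}{dq}\big(q^2/(1-q)\big)=q(2-q)/(1-q)^2>0$ on $(0,1)$, and the higher-order contribution grows like $q^3$, so $B(p_n,\lambda,\sigma)\le B(q,\lambda,\sigma)$. Chaining the two inequalities yields $\alpha_{\mathcal{G}'}(\lambda)\le B(q,\lambda,\sigma)$, the desired bound.

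The step I expect to be the real obstacle is the conditioning/translation reduction when $\mathcal{J}'$ is an \emph{arbitrary} (possibly correlated) subset rather than an independent subsample. The argument needs the base sum $V$ to have the same law on the two adjacent databases and the conditional loss to be translation-invariant, which requires that the filtering of the other records is independent of whether record $n$ is present and of its value. This holds automatically for the natural filtering that keeps each sampled element on its own, and it matches \sys's setting, where the aggregator performs the filtering while blind to the encrypted update contents and so cannot correlate a record's survival with its value. Making this independence precise, and confirming that even value-dependent survival collapses---after fixing $f(d_n)$---to a reduced inclusion probability $p_n\le q$, is where the care is needed.
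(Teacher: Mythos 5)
Your proof is correct in outline but takes a genuinely different route from the paper's. The paper keeps the sampling probability fixed at $q$ and absorbs the filtering into the \emph{function}: it defines $f'(d_i)=f(d_i)$ for $i\in\mathcal{J}'$ and $f'(d_i)=0$ for $i\in\mathcal{J}\setminus\mathcal{J}'$, notes that $\|f'(\cdot)\|_2\le\|f(\cdot)\|_2\le 1$ and that $\mathcal{G}_{f',\mathcal{J}}(d)=\mathcal{G}_{f,\mathcal{J}'}(d)$, and then invokes Lemma~\ref{lm:moment} verbatim as a black box. You instead keep the function fixed and absorb the filtering into the \emph{inclusion probability} $p_n\le q$ of the differing record, which forces you to (i) reopen the lemma's proof --- the conditioning-on-$V$, translation, and rotation reduction to the one-dimensional sampled Gaussian, where passing the moment through the mixture over $V$ additionally needs the joint convexity of $(a,b)\mapsto a^{\lambda+1}/b^{\lambda}$, so the ``equals'' in your reduction should really be ``$\le$'' --- because the lemma as stated assumes a single uniform probability for all records; and (ii) verify monotonicity of the bound in $q$, which you do correctly ($\tfrac{d}{dq}\,q^2/(1-q)>0$ on $(0,1)$, and the side conditions transfer since $p_n\le q$). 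The paper's reduction is shorter and never leaves the level of the lemma's statement; yours is heavier but more informative, since it exhibits that the privacy loss genuinely decreases under filtering rather than merely satisfying the same bound. The independence caveat you flag at the end is real, but it applies equally to the paper's argument: for $f'$ to be a well-defined per-record function (and for your $p_n$ reduction to go through), the filtering must be per-record and data-independent, e.g.\ $\mathcal{J}'=\mathcal{J}\cap T$ for a fixed $T$, which is the intended reading given that the aggregator only ever sees encrypted updates.
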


\begin{proof}
Let's abuse the notation a little bit and define $\mathcal{G}_{f,\mathcal{J}}$ to be
$$\mathcal{G}_{f,\mathcal{J}}(d)=\sum_{i\in \mathcal{J}} f(d_i) + \mathcal{N}(0,\sigma^2 \boldsymbol{I}).$$

To prove this claim, since by definition $\mathcal{G'}(d)= \mathcal{G}_{f,\mathcal{J'}}(d)$, we just need to show 
$$\alpha_{\mathcal{G}_{f,\mathcal{J'}}}(\lambda)\leq
\frac{q^2\lambda(\lambda+1)}{(1-q)\sigma^2}+O(q^3\lambda^2/\sigma^3).$$

To do this, suppose for now we have a $f'$ on $\mathcal{J}$ whose sensitivity is no greater than 1 and gives the same output as $f$ on $\mathcal{J}'$, namely
$$\|f'(\cdot)\|_2 \le 1,\mathcal{G}_{f',\mathcal{J}}(d)=\mathcal{G}_{f,\mathcal{J'}}(d).$$

By applying Lemma~\ref{lm:moment} on $\mathcal{G}_{f',\mathcal{J}}$, we get
$$\alpha_{\mathcal{G}_{f',\mathcal{J}}}(\lambda)\leq
\frac{q^2\lambda(\lambda+1)}{(1-q)\sigma^2}+O(q^3\lambda^2/\sigma^3).$$

Since $\mathcal{G}_{f',\mathcal{J}}(d)=\mathcal{G}_{f,\mathcal{J'}}(d)$,
$$\alpha_{\mathcal{G}_{f',\mathcal{J}}}(\lambda)=\alpha_{\mathcal{G}_{f,\mathcal{J'}}}(\lambda).$$

Combining the preceding two equations, we get
$$\alpha_{\mathcal{G}_{f,\mathcal{J'}}}(\lambda)=\alpha_{\mathcal{G}_{f',\mathcal{J}}}(\lambda)\leq
\frac{q^2\lambda(\lambda+1)}{(1-q)\sigma^2}+O(q^3\lambda^2/\sigma^3).$$

The remaining task is to construct such a $f'$ on $\mathcal{J}$, where $\|f'(\cdot)\|_2\le 1$, to give the same output as $f$ on $\mathcal{J}'$. One possible $f'$ is as follows:
$$f'(d_i)= \begin{cases}
       f(d_i) &\quad\text{if } i\in \mathcal{J}',\\
       0 &\quad\text{if } i\in \mathcal{J}-\mathcal{J}'.
     \end{cases}$$
It's easy to prove $\sum_{i\in \mathcal{J}'} f(d_i)=\sum_{i\in \mathcal{J}} f'(d_i)$.

Next, for the sensitivity of $f'$, it is not difficult to see
$\|f'(\cdot)\|_2\le \|f(\cdot)\|_2$, since by removing or adding one entry to the
database, $f'$ will incur either the same change as $f$ or no change.
\end{proof}
So far we have  proved that if only a subset of 
selected devices are included in aggregation or more malicious devices' 
data is included, the differential privacy guarantee
will not be impacted.

\emparagraph{Gaussian Noise (line~\ref{l:dpfedavg:aggregateupdates}).}
Recall that we also add some additional noise in line~\ref{l:dpfedavg:aggregateupdates} in Figure~\ref{f:dpfedavg}.
For the Gaussian noise, we need to prove additional noise will not 
impact privacy, which is not difficult to show, 
since this change is also equivalent to post-processing. 

With the above proof, we've showed that our modified DP-FedAvg 
provides the same privacy guarantee 
as the original DP-FedAvg.

\subsubsection{DP-noise committee}\label{sssec:noise}
This section shows that the $\mathcal{N}(0,I\sigma^2)$ part of 
line~\ref{l:dpfedavg:aggregateupdates} of our modified DP-FedAvg is executed faithfully.

We have already covered in the generate
phase (\S\ref{s:design:generate}) that $\mathcal{N}(0,I\sigma^2)$ amount of noise
will be generated by the DP-noise committee as long as the DP-noise committee does not violate
its threshold: less than $A$ out of $C$ devices of the DP-noise committee are indeed malicious.
Thus, in this section we will derive the probability of a committee having fewer than some threshold of
honest members. We will follow the same way to compute the probability as in
Honeycrisp~\cite{roth2019honeycrisp}, which is a 
building block for Orchard~\cite{roth2020orchard}.

\begin{claim}
(\sys) If a randomly sampled DP-noise committee size is $C$, the probability of a committee member being malicious is $f$, the probability that
committee has fewer than $(1-t)\cdot C$ honest members is upper-bounded
by $\mathit{p}=e^{-fC}(\frac{ef}{t})^{tC}$, when $1>t\geq f.$
\end{claim}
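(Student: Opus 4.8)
The plan is to recognize this as a standard upper-tail Chernoff bound on a binomial random variable and to pick the optimizing parameter so that the exponent collapses to exactly the stated expression. First I would translate the event. Writing $H$ for the number of honest members and $X = C - H$ for the number of malicious members, the event ``fewer than $(1-t)C$ honest members'' is precisely $X > tC$. Since each of the $C$ committee members is independently malicious with probability $f$, I model $X = \sum_{i=1}^{C} X_i$ as a sum of i.i.d.\ Bernoulli$(f)$ indicators, so that $\mathbb{E}[X] = fC$. The hypothesis $t \ge f$ places the deviation threshold $tC$ at or above the mean $fC$, which is exactly the regime in which the upper-tail bound is meaningful and in which the optimizing parameter chosen below is nonnegative; the side condition $t < 1$ just ensures that the target event (more than $tC$ of $C$ members malicious) is nonvacuous.

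Next I would apply the exponential Markov inequality: for any $s > 0$, $\Pr[X \ge tC] \le e^{-stC}\,\mathbb{E}[e^{sX}] = e^{-stC}\,(1 - f + f e^{s})^{C}$. Using the elementary bound $1 + u \le e^{u}$ with $u = f(e^{s}-1)$ gives $(1 - f + f e^{s})^{C} \le e^{fC(e^{s}-1)}$, so $\Pr[X \ge tC] \le \exp\!\big(fC(e^{s}-1) - stC\big)$.

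Then I would optimize the exponent over $s$. Differentiating and setting the derivative to zero yields $e^{s} = t/f$, i.e.\ $s = \ln(t/f)$, which is admissible (nonnegative) precisely because $t \ge f$. Substituting $e^{s} = t/f$ reduces the exponent to $fC(t/f - 1) - tC\ln(t/f) = tC - fC - tC\ln(t/f)$, and exponentiating gives $e^{-fC}\,e^{tC}\,(f/t)^{tC} = e^{-fC}\,(ef/t)^{tC}$, which is exactly the claimed upper bound $p$.

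The only genuine subtlety, rather than an obstacle, is the modeling step: a committee is drawn from the finite global population, so $X$ is really hypergeometric rather than binomial. I would handle this by noting that for upper-tail events sampling without replacement is stochastically dominated by sampling with replacement, so the binomial Chernoff bound computed above is a valid upper bound in the hypergeometric case as well; this is the same reduction used by Honeycrisp~\cite{roth2019honeycrisp}, on which this claim is patterned. Everything else is routine calculus, so I expect no real difficulty.
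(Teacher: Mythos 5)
Your proposal is correct and follows essentially the same route as the paper: model the number of malicious members as a sum of independent Bernoulli$(f)$ indicators and apply the multiplicative Chernoff bound to $\Pr[X \ge tC]$, which the paper simply invokes where you derive it explicitly via the MGF and the optimal choice $e^{s}=t/f$. Your additional remark on stochastic domination of the hypergeometric by the binomial tail is a sound justification of the independence modeling step that the paper takes for granted.
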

\begin{proof}
We treat each member being malicious as independent events. Let $X_i$ be a random variable
$$X_i=\begin{cases}
1, &\quad\text{if member i is malicious,}\\
0, &\quad\text{if member i is honest.}
\end{cases}$$Let $X=\sum_i X_i$ be the random variable representing the number
of malicious members. If $t\geq f$, the Chernoff bound shows that $$Pr(X \geq tC) \leq e^{-fC}(\frac{ef}{t})^{tC}.$$
So the probability of fewer than $(1-t)\cdot C$ members being honest will be upper-bounded by $\mathit{p}.$
\end{proof}

With (high) probability, the lower bound of honest committee members
will be $(1-t)C$. As long as each honest member contributes
$\frac{1}{(1-t)C}\cdot \mathcal{N}(0,\sigma^2\boldsymbol{I})$ noise, the total amount
of noise will be no less than $\mathcal{N}(0,\sigma^2\boldsymbol{I})$.

To give some examples of what committee sizes could be, 
when $f=0.03$, we may set $t=1/7, C=280$, to achieve $\mathit{p}=4.1\cdot
10^{-14}$; when $f=0.05$, we may set $t=1/8, C=350$ to achieve $\mathit{p}=9.78\cdot
10^{-7}$ or $C=450$ to achieve $\mathit{p}=1.87\cdot 10^{-8}$; and, 
when $f=0.10$, we may set $t=1/5, C=350$ to achieve $\mathit{p}=1.34\cdot
10^{-6}$ or $C=450$ to achieve $\mathit{p}=2.82\cdot 10^{-8}$.

\subsubsection{Aggregation}\label{sssec:agg}
This section will prove that the additions in line~\ref{l:dpfedavg:aggregateupdates} in our modified DP-FedAvg 
are executed faithfully in \sys. Otherwise, the aggregator will be caught with a high probability.

We will first prove the integrity of additions as in \sys's add phase protocol and in Honeycrisp. Next, we will
prove the freshness guarantee that no ciphertexts from previous rounds can be included in the current aggregation. Finally, we
will prove how these two proofs together  show that line~\ref{l:dpfedavg:aggregateupdates} of 
modified DP-FedAvg is executed faithfully by \sys.

\emparagraph{Integrity.} We will start with the integrity claim from
Honeycrisp~\cite{roth2019honeycrisp}. 

\begin{claim}\label{sssec:aggclaim}
At the end of add phase, if no device has found malicious activity by the
aggregator $\mathcal{A}$, the sum of the ciphertexts published by $\mathcal{A}$
is correct (with high probability) and no inputs of malicious nodes are
dependent on inputs of honest nodes (in the same round).  
\end{claim}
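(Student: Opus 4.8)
The plan is to split Claim~\ref{sssec:aggclaim} into its two assertions --- (i) the published root of each summation tree equals the true sum of the submitted leaf ciphertexts with high probability, and (ii) no malicious leaf can be chosen as a function of an honest leaf from the same round --- and to obtain each by adapting the summation-tree integrity argument of Honeycrisp~\cite{roth2019honeycrisp} to \sys's finer-grained, PIT-accelerated protocol of Figure~\ref{f:addphase}.

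For assertion (i), I would first reduce ``the sum is wrong'' to ``a local check fails somewhere.'' By the summation-tree invariant, if every internal vertex of every $ST_j$ equals the sum of its two children and every leaf equals the value the corresponding device submitted, then by induction up each tree the root equals the sum of the leaves; hence an incorrect root forces either a tampered/omitted leaf or a violating internal vertex. Leaf tampering is ruled out by the commit step: the verify step (line~\ref{l:addphase:checkzkproof}) reopens $t_{ij}=Hash(r_{ij}\|c_{ij}\|\pi_i)$ and checks membership in $MC_j$ and $MS_j$, so an altered leaf yields an invalid opening; omission of an honest leaf is caught by that honest device itself checking its own leaf is present (requirement \textbf{R1}), and the sortedness test $\pi_i\le\pi_{i+1}$ blocks reordering and duplication. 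It then remains to bound the probability that a violating internal vertex escapes detection. Each of the $N$ honest-and-online devices independently selects trees with probability $q$ and inspects a random window of vertices within each (lines~\ref{l:addphase:picksummationtrees}--\ref{l:addphase:verifynonleaf}); I would lower-bound the per-vertex inspection probability and take a union bound over the polynomially many vertices of the $\ell$ trees, using $N\approx 10^9$ to drive the escape probability negligible. The one genuinely new ingredient relative to Honeycrisp is that internal vertices are checked through PIT rather than directly: here I invoke Schwartz--Zippel~\cite{schwartz1980fast,zippel1979probabilistic} --- if a claimed child/parent sum is false as a polynomial identity, its evaluation at the published point $r$ is nonzero except with probability $\le d/|\mathbb{F}|$ --- and add this negligible term to the failure bound. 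Crucially, $r$ is published by the master committee only during the add step, after the trees and their Merkle roots are fixed, so the adversary cannot adapt the corrupted polynomials to the evaluation point; I would make this ordering explicit.

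For assertion (ii), the argument is a commit-before-reveal reduction. In the commit step every contributing device posts $t_{ij}=Hash(r_{ij}\|c_{ij}\|\pi_i)$ and the aggregator publishes the Merkle root of $Commit_j$ to the bulletin board \emph{before} any device reveals its opening $(\pi_i,c_{ij},r_{ij},z_{ij})$. I would argue that, by the hiding property of the nonced hash commitment, a malicious device's committed value is statistically independent of the honest ciphertexts it has not yet seen, and by the binding property together with the inclusion of $\pi_i$ inside the hash, it cannot later open its slot to an honest device's ciphertext under its own key; the ZK-proof $z_{ij}$ additionally forces the opened value to be a well-formed, norm-bounded ciphertext (requirement \textbf{R2}). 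Composing these gives same-round independence. Cross-round copying (requirement \textbf{R3}) is explicitly outside this claim and is deferred to the freshness argument that follows, which relies on the round-number prefix enforced by the ZK constraint.

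I expect the main obstacle to be the quantitative detection bound in assertion (i): because \sys spreads the checking work over $\ell$ separate trees and a huge but finite population, I must verify that the effective per-vertex coverage after sampling still yields a union bound small enough to call ``high probability,'' and I must carefully compose the sampling-based detection probability with the PIT soundness error while justifying the independence of $r$ from the already-committed trees. The commit-before-reveal step for assertion (ii) is conceptually the crux of the independence guarantee, but once the timing and the hiding/binding properties of the commitment are pinned down it reduces to a standard argument.
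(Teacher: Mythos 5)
Your proposal follows essentially the same route as the paper's proof: the same split into (i) a per-vertex probabilistic detection argument for the summation trees (the paper bounds the escape probability of a single corrupted leaf by $(1-qs/M')^{(1-f)W}\le e^{-(1-f)s}$ and defers non-leaf vertices to Honeycrisp) and (ii) a commit-before-reveal argument for same-round independence (the paper phrases your hiding/binding step as a Random Oracle argument on $t=Hash(r\|c\|\pi)$). Your additional care about the PIT soundness term and the timing of the evaluation point $r$ goes slightly beyond what the paper writes down, but the overall structure and key ideas are the same.
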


\begin{proof}
We refer readers to Honeycrisp~\cite{roth2019honeycrisp} for more details. Here
we will just give a short version for demonstration.

We will first prove no inputs of malicious devices are dependent on inputs of
honest devices in the same round. Next we will prove if a malicious aggregator
introduces an error into a summation tree, it will be caught with high
probability.

First, assume for the sake of contradiction that it is possible for a malicious
device to set its ciphertext to be $c$ that is from an honest device in the
current round. The adversary needs to produce a $t=Hash(r||c||\pi)$ and include
$t$ in the Merkle tree $MC$, before the honest device reveals $c$. Under the Random Oracle
assumption in cryptography, this is not possible. 
So no inputs of malicious devices are
dependent on inputs of honest devices in the same round.

Next we need to show if $\mathcal{A}$ introduces an error into a summation tree $ST$, it will be
caught with high probability. In particular, here we will just show the case
where $\mathcal{A}$ introduces an error into one of the leaf nodes. Similar analysis can be
done for non-leaf nodes, which is presented in
Honeycrisp~\cite{roth2019honeycrisp}.

Suppose the total number of verifiers is $W$ and the total number of 
leaf nodes in one summation tree is $M'$ (Figure~\ref{f:addphase}). 
Here we make the assumption that $M'\approx qW$, where $q$ is the sampling
probability. We will prove this assumption later in this section.

Suppose $\mathcal{A}$ introduces an error into a particular leaf node $j\in[0,M'-1]$.
The probability of an honest device picking any $v_{init}$ to have
$j\in[v_{init},v_{init}+s]$ is
$$\frac{qs}{M'}.$$
Since there are at least $(1-f)W$ honest devices, the probability
of no honest device checking $j$ is 
$$(1-\frac{qs}{M'})^{(1-f)W}
=(1-\frac{s}{W})^{(1-f)W}
\leq e^{-(1-f)s}.$$

Similarly we can prove if $\mathcal{A}$ introduces error into non-leaf nodes, it will be caught with a high probability.
For instance, if $f=3\%,s=5$, the probability is about 0.007; if $f=3\%, s=20$, the probability is about $10^{-11}$.

\end{proof}

As noted above, proof relies on an assumption about the threshold of Sybils (pseudonym leaf nodes)
the aggregator can introduce into the summation tree.  Intuitively, if the
aggregator can introduce as many Sybils as it wants to make $M'\gg qW$, then the probability of
each node being covered will decrease by a large factor, thus impacting privacy. 
For example, if $W=100,q=0.01$, $M'$ is expected to be close to $qW=1$ and the 
100 verifiers expect to verify only 1 leaf node. However, 
if the aggregator introduces 99 additional leaf nodes into the summation tree 
while the 100 verifiers still expect to verify only 1 leaf node, obviously many malicious nodes are very
likely to be missed by the verifiers. Before claiming in \sys there is a threshold of Sybils, we need to introduce an assumption from Honeycrisp~\cite{roth2019honeycrisp}, which we will also use in our claim.

\begin{assumption}\label{assumption:device_number}
(Honeycrisp) All devices know an upper bound $W_{max}$ and a lower bound
$W_{min}$ of the number of potential participating devices in the system.  If
the true number of devices is $W_{tot}$, then by definition: $W_{min} \leq W_{tot} \leq W_{max}$.
We assume $\frac{W_{max} - W_{tot}}{W_{min}}$ is always below some constant
(low) threshold (this determines the portion of Sybils the aggregator $\mathcal{A}$ could make
without getting caught). $\frac{W_{max}}{W_{min}}$ should also be below some
(more generous) constant threshold.  \end{assumption}

With the above assumption, similarly, we claim in \sys:

\begin{claim}
(\sys) There is a upper bound on the portion of
Sybils a malicious aggregator can introduce without being caught.
Precisely, all devices know an upper bound $M_{max}$ and a lower bound
$M_{min}$ of the number of potential leaf nodes in the summation tree. If the true number of leaf nodes is $M_{tot}$, then by definition: $M_{min} \leq M_{tot} \leq M_{max}$.
In \sys, $\frac{M_{max} - M_{tot}}{M_{min}}$ is always below some constant
(low) threshold (this determines the portion of Sybils the aggregator $\mathcal{A}$ could make
without getting caught). $\frac{M_{max}}{M_{min}}$ will also be below some
(more generous) constant threshold. 
\end{claim}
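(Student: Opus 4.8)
The plan is to reduce this claim to Honeycrisp's device-count assumption (Assumption~\ref{assumption:device_number}) by exploiting the one structural fact that separates \sys from Honeycrisp: self-sampling. A device $k$ becomes a leaf of a summation tree exactly when its local test $PRG(\pi_k \,\|\, B^t) \le q$ succeeds, and this is an independent pseudorandom event that fires with probability $q$; hence the leaf population is, up to concentration, a $q$-fraction of the device population. I would therefore \emph{define} the leaf-count bounds by scaling the device-count bounds: set $M_{max} = q\,W_{max}$ (exactly the quantity verifiers already enforce in the verify step of Figure~\ref{f:addphase}, where each device checks $M' \le M_{max}$) and $M_{min} = q\,W_{min}$, each adjusted by a small multiplicative concentration margin $(1\pm\gamma)$. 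The claim then becomes an exercise in cancelling the common factor $q$.

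The first step is to pin down $M_{tot}$. Writing the genuine leaf count as $X = \sum_{k} \mathbf{1}[PRG(\pi_k \,\|\, B^t) \le q]$ over the $W_{tot}$ true devices, $X$ is a sum of independent Bernoulli$(q)$ variables with mean $q\,W_{tot}$, so a multiplicative Chernoff bound gives $X \in [(1-\gamma)qW_{tot},\,(1+\gamma)qW_{tot}]$ except with probability $2\exp(-\gamma^2 q W_{tot}/3)$. In the deployed regime ($qW \approx 10^{-5}\cdot 10^9 = 10^4$) this failure probability is negligible even for $\gamma$ as small as $0.1$, which lets me treat $M_{tot}\approx q\,W_{tot}$ as holding with overwhelming probability and conclude $M_{min}\le M_{tot}\le M_{max}$. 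I would also observe here that the aggregator cannot push the realized count above $M_{max}$: every leaf must carry a registered identity passing the PRG test, verifiers reject outright when $M' > M_{max}$, and the sortedness check $\pi_i \le \pi_{i+1}$ forbids duplicate identities—so the aggregator's only freedom is to fill the slack between $M_{tot}$ and $M_{max}$ with Sybils.

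With $M_{tot}\in[M_{min},M_{max}]$ established, the two ratio bounds follow mechanically once $q$ cancels:
\[
\frac{M_{max}-M_{tot}}{M_{min}} \;\approx\; \frac{q(W_{max}-W_{tot})}{q\,W_{min}} \;=\; \frac{W_{max}-W_{tot}}{W_{min}}, \qquad \frac{M_{max}}{M_{min}} \;=\; \frac{W_{max}}{W_{min}},
\]
and Assumption~\ref{assumption:device_number} already bounds both right-hand sides by constants (a low threshold for the first, a more generous one for the second). This is precisely the Sybil-portion statement the integrity argument of Appendix~\ref{sssec:agg} requires: it certifies that a maliciously inflated $M'$ is at most a constant factor above $qW$, so the coverage estimate there degrades only by that constant in the exponent rather than collapsing.

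The main obstacle I anticipate is not the algebra but making the bridge $M_{tot}\approx qW_{tot}$ honest across the full parameter range rather than only in the deployed regime. Concentration is tight precisely when $qW$ is large; the cautionary example $W=100,\,q=0.01$ from the surrounding text (where $qW=1$) is exactly where a multiplicative Chernoff bound is useless and the claim must instead be read as a parameter assumption (or carried with additive slack). I would therefore state $\gamma$ and the required lower bound on $qW_{min}$ explicitly, and flag that the constant thresholds inherited from Assumption~\ref{assumption:device_number} must be loosened by the $(1\pm\gamma)$ factors introduced when scaling $M_{max}$ and $M_{min}$.
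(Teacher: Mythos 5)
Your proposal is correct and follows essentially the same route as the paper's proof: both define the leaf-count bounds by scaling the device-count bounds by $q$ with small multiplicative concentration factors (your $(1\pm\gamma)$ is the paper's $k_{min},k_{max}$), both invoke a Chernoff bound on the binomial leaf count to argue those factors are close to $1$ when $qW$ is large, both cancel $q$ to reduce the two ratio bounds to Assumption~\ref{assumption:device_number}, and both flag the small-$qW$ regime as the place where the concentration argument degrades and the bound must be loosened. The only additions on your side (the explicit remark that the verify-step check $M'\le M_{max}$ and the sortedness of the $\pi_i$ cap the realized count) are consistent with, but not needed for, the paper's argument.
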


\begin{proof}
Note that in this claim, by "leaf nodes", we mean leaf nodes corresponding to a honest generator device. Once we know a upper bound $M_{max}$ and a lower bound $M_{min}$ of the number of leaf nodes, in the worst case, a malicious aggregator could introduce at most $M_{max}-M_{min}$ Sybils. If the true number of leaf nodes is $M_{tot}$, then the aggregator could introduce at most $M_{max}-M_{tot}$ Sybils.

Let's first consider $M_{min}$ and $M_{max}$.

Informally, if the number of devices $W$ is large enough, the number of leaf nodes (generators) is likely to be close to the expectation $qW$. Suppose for now there are two constants capturing this "closeness": $k_{min}$, $k_{max}$, where $k_{min}\approx 1, k_{max} \approx 1$. Then, we have
$$k_{min}\cdot qW_{min}\leq M_{min} \leq k_{max}\cdot qW_{min},$$
$$k_{min}\cdot qW_{max}\leq M_{max} \leq k_{max}\cdot qW_{max}.$$

Consider the fraction $\frac{M_{max}}{M_{min}}.$
$$\frac{M_{max}}{M_{min}}\leq \frac{k_{max}\cdot qW_{max}}{k_{min}\cdot qW_{min}}=\frac{k_{max}}{k_{min}}\cdot \frac{W_{max}}{W_{min}}$$

$\frac{W_{max}}{W_{min}}$ is below some constant threshold as in Assumption~\ref{assumption:device_number}. If $\frac{k_{max}}{k_{min}}$ is smaller than some constant threshold, then it's reasonable to assume like in Honeycrisp that $\frac{M_{max}}{M_{min}}$ is below some constant threshold. We will discuss the values of $k_{max}$ and $k_{min}$ at the end of the proof.

As for the fraction $\frac{M_{max}-M_{tot}}{M_{min}}$, it's not difficult to see that,
$$\frac{M_{max}-M_{tot}}{M_{min}}\leq \frac{k_{max}W_{max}-k_{min}W_{tot}}{k_{min}W_{min}}$$

Similarly if both $k_{max}$ and $k_{min}$ are
close to 1, this fraction will be close to $\frac{W_{max}-W_{tot}}{W_{min}}$, which is below some small constant (low) threshold as specified in Assumption~\ref{assumption:device_number}. So in \sys, it is reasonable to assume this fraction is also below some constant (low) threshold.

Lastly, let's discuss $k_{max}$ and $k_{min}$. Consider a general case where the total number of devices is $W$ and the sampling probability is $q$. The number of sampled devices (leaf nodes) is $X$. Chernoff bound states that,

$$Pr(X< k_{min}qW) <
(\frac{e^{k_{min}-1}}{{k_{min}}^{k_{min}}})^{qW}.$$

With $qW=5000$, $k_{min}=0.9$, the above probability will be smaller than
$5.77\cdot 10^{-12}$. Similarly, with $qW=10000,k_{min}=0.93$, the
probability will be smaller than $1.27\cdot 10^{-11}$. Since \sys is designed for large-scale training, it's reasonable to assume $k_{min}$ is close to 1. The argument is similar for $k_{max}$ (e.g. $k_{max}=1.01$).

However, if $qW$ is, for example, 500, to achieve a similar probability of $1.17\cdot 10^{-11}$, $k_{min}$ will be about 0.7. In this setting, it's not reasonable to assume $M'=qW$ as in Claim~\ref{sssec:aggclaim} any more. Instead one will need to assume a different bound, e.g. $M'<2qW$, and re-calculate the probability.
For example, if $M'=2qW$, in Claim~\ref{sssec:aggclaim}, the probability of an honest device checking a particular leaf node will still be the same. However, the probability of no honest device checking a particular node will be instead
$$(1-\frac{qs}{M'})^{(1-f)W}
<(1-\frac{s}{2W})^{(1-f)W}
\leq e^{-(1-f)s/2}$$
Verifiers are expected to verify more nodes in the summation tree to make Claim~\ref{sssec:aggclaim} true and thus ensure privacy.
\end{proof}

\emparagraph{Freshness.} 
Reusing the encryption key will not
affect the security proof in each round, but will lead to attacks across rounds, where information from previous rounds can be leaked.
The adversary may obtain the victim device's ciphertext $Enc(m)$ from a previous
round and use $k\cdot Enc(m)$, where $k$ is a large constant, to participate in
a later round. Since $k$ is large, $k\cdot Enc(m)$ will dominate the aggregated result. After decryption, the adversary will be able to learn approximately $m$. We define the 
 freshness as the guarantee that only fresh generated ciphertexts are included in the aggregation.

We need to prove that by asking each device to put the round number in the first
slot of ciphertext, generating corresponding ZK-proof, and in the add phase
verifying the ZK proof, the adversary will not be able to use ciphertexts from
previous rounds in the current round.

First, according to the knowledge soundness property of zkSNARK, which states that
it is not possible for a prover to construct a proof without knowing the
witness (e.g. secret inputs), an adversary can't construct a proof for a new
round~\cite{nitulescuzk}. This means the adversary can only insert a non-valid
proof into the summation tree.

Next, if the adversary inserts one ciphertext from a previous round with a non-valid proof into the
summation tree, with high probability, it will be caught, since as claimed
before, with high probability each leaf node will be checked by some honest
devices. The honest devices will
be able to detect this error.

\emparagraph{How \sys supports modified DP-FedAvg.} We've covered the integrity
and freshness of \sys's add phase/aggregation. Next we'll show the modified
DP-FedAvg will be executed faithfully. In order to show this, we claim the
following:

\begin{claim}
Data from honest generator devices will be included at most once in the aggregation; data from honest DP-noise committee members will be included exactly once.
\end{claim}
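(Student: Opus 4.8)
The plan is to assemble this claim from the three integrity-style guarantees already established for the add phase, matching each to one of the requirements R1--R3 from \S\ref{s:design:add}. The ``at most once'' half---all that is needed for both kinds of honest data---follows from the commit-before-reveal structure together with the integrity and freshness arguments; the strengthening to ``exactly once'' for the DP-noise committee additionally uses the per-contributor inclusion check. Throughout I would work under the hypothesis of Claim~\ref{sssec:aggclaim} that no honest device reports malicious activity, so that statements hold with high probability in the same sense as that claim.

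First I would prove ``at most once.'' Fix an honest contributor (generator or committee member) with public key $\pi_i$ whose $j$-th ciphertext is $c_{ij}$. By the leaf-coverage argument of Claim~\ref{sssec:aggclaim}, any leaf of $ST_j$ that survives honest verification and carries $c_{ij}$ must be backed by an opening of a commitment residing in the already-published Merkle tree $MC_j$, i.e.\ by a preimage of the form $Hash(r\|c_{ij}\|\pi)$ (line~\ref{l:addphase:checkzkproof} of Figure~\ref{f:addphase}). In the commit step the honest device publishes its single commitment $t_{ij}=Hash(r_{ij}\|c_{ij}\|\pi_i)$ and only then reveals $c_{ij}$; since $MC_j$ is fixed before $c_{ij}$ is disclosed, the random-oracle/binding argument of Claim~\ref{sssec:aggclaim} shows the aggregator cannot have placed a second commitment to $c_{ij}$ (under $\pi_i$ or any other key, rescaled to $k\cdot c_{ij}$ or not) into $MC_j$, as that would require evaluating $Hash$ on $c_{ij}$ before it was known. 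Hence within the round $c_{ij}$ occupies at most one leaf, which simultaneously rules out duplication by the aggregator and copying by a malicious device (requirement R2). The freshness argument then rules out re-insertion of the same contributor's ciphertext from an earlier round (requirement R3): any stale ciphertext carries either no valid ZK-proof or a proof whose round-number prefix is wrong, and such a leaf is checked and rejected by an honest verifier with high probability. Combining the two gives at most one occurrence across all rounds.

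Next I would upgrade this to ``exactly once'' for the DP-noise committee. Every honest committee member is online by definition (\S\ref{sssec:noise}), and in the verify step it checks that its own leaf appears in the published $ST_j$ and $MS_j$ (requirement R1). Were the aggregator to omit that leaf, the member would detect the omission and report malicious activity, contradicting the hypothesis of Claim~\ref{sssec:aggclaim}; hence each honest committee member's noise share is included at least once, and with the previous paragraph, exactly once. The asymmetry with generators is deliberate and necessary: an honest generator may be honest-but-offline, contributing nothing, and even an online honest generator may be silently filtered out by the aggregator---an action the sampling claim of \S\ref{sssec:dp} already shows is harmless for differential privacy---so for honest generators (quantifying over both online and offline) the tight and sufficient statement is at most one occurrence.

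The main obstacle I expect is the ``at most once'' case analysis: it must close every channel by which $c_{ij}$ could reappear---verbatim duplication, appearance under a different (malicious or Sybil) public key, homomorphic rescaling to $k\cdot c_{ij}$, and cross-round replay---and show each is defeated by the same commit-before-reveal binding or by freshness, rather than handling them piecemeal. The delicate point is that the commitment hides $c_{ij}$ only until the reveal, so the whole argument must be phrased in terms of what the aggregator can commit to \emph{before} the reveal; making this precise in the random-oracle model, and cleanly separating the same-round part (Claim~\ref{sssec:aggclaim}) from the cross-round freshness part, is where the care is needed.
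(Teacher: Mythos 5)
Your proof is correct and follows essentially the same route as the paper's: the ``at most once'' half is discharged by appealing to the integrity (commit-before-reveal) and freshness guarantees already established for the add phase, and the upgrade to ``exactly once'' for DP-noise committee members uses the fact that honest members are online by definition and can verify, via the Merkle proof of inclusion in $MS_j$ issued in the add step, that their leaf was not omitted. Your additional enumeration of duplication channels and the explicit remark on why generators only get ``at most once'' are elaborations of, not departures from, the paper's argument.
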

\begin{proof}
It's not difficult to see that both honest generators' and honest DP-noise
committee members' data will be included at most once, as defined in
Claim~\ref{sssec:aggclaim}. So for honest generator devices, the proof is done. 

As for honest DP-noise committee member, we just need to prove data from an
honest member will be included (at least once). Recall that for a honest
committee member, it is always online during a round, otherwise will be
considered as malicious. Also recall that in the add phase, after the aggregator
constructs the summation tree and the Merkle tree, the aggregator needs to send
a Merkle proof to the device that its data is included in the summation tree
(Step 7 in Figure~\ref{f:addphase}). An honest committee member can thus make
sure its data is included in the aggregation by verifying this Merkle proof.
\end{proof}

This claim captures the requirement for faithfully executing the modified
DP-FedAvg.

Firstly, data from honest generators (including online and offline) will be
included at most once, which is exactly what our new sampling method does
(Appendix~\S\ref{sssec:dp}): some honest generators might be filtered while
others not; those included in the aggregation will be added exactly once.

Secondly, data from honest DP-noise committee members (only online) will be
included exactly once, which ensures that enough Gaussian noise will be added.

Combining with the integrity and freshness of the underlying aggregation, the
modified DP-FedAvg protocol will be executed as it is in \sys.

\subsubsection{Decryption}\label{sssec:decrypt}
The last step in \sys's protocol is decryption.
In this section, we'll prove \sys's decryption protocol will not leak any
information, except the decrypted result.

Let's first review the BFV scheme~\cite{brakerski2012fully,fan2012somewhat}.
Let the  field for the coefficients of the ciphertext polynomials be $Q$, 
the polynomials themselves be from a polynomial ring $R_Q$,
the distribution $\phi$ for the coefficients of error polynomials 
be the required Gaussian distribution $\phi$
(standard deviation=3.2), and the secret key $sk$ be a polynomial of same degree $N$ as the ciphertext polynomials but
with coefficients from the ternary distribution (\{-1,0,1\}), which we
denote as $\psi$. Then, given a small constant $\gamma \ll 1$, the BFV scheme has the following procedures
\begin{itemize}
    \item $Keygen$: $s \leftarrow \psi, a\leftarrow R_Q, e\leftarrow \phi$. Compute $b=as+e$ and output $pk=(a,b)$ and $sk = s$
    \item $Enc(pk, m)$: $e_1\leftarrow \phi,e_2\leftarrow \phi,r\leftarrow \psi$, output $(c_1=ar+e_1,c_2=br+e_2+m/\gamma)$
    \item $Dec(c_1,c_2)$ : output $m=\lfloor (c_2-c_1s)\cdot \gamma\rceil$
\end{itemize}

Since $m$ will finally be known to the aggregator and revealing $m$ is safe,
without losing generality, assume
$(c_1,c_2)$ is the encryption of 0. Further, as defined in release phase (\S\ref{s:design:release}),
$$e_{small}=c_2-c_1\cdot s=re+e_2-se_1.$$
This small error must remain hidden during the decryption process; otherwise, it may reveal information about the 
secret key $s$ or the polynomial $r$. To hide $e_{small}$, \sys's scheme applies 
the smudging lemma~\cite{asharov2012multiparty}. This 
lemma states that to achieve $2^{-\lambda}$
statistical distance between $e_{small}$ and
$e_{small}+e_{smudging}$, $e_{smudging}$ just needs to be sampled from a uniform
distribution whose bound is $\lambda$ bits more than the upper bound of
$e_{small}$. Suppose the smudging distribution is $\phi'$, which is some uniform distribution whose bound is $\lambda$ bits more than the upper bound of $e_{small}$.

Recall that in the release phase, the decryption committee reveals $c_1\cdot s +
e_{smudging}$ to the aggregator, where $e_{smudging}\leftarrow \phi'$. So the adversary's view is
$$(c_2,c_1,c_1s+e_{smudging}),$$
which is equivalent to
$$(c_2,c_1,-c_2+c_1s+e_{smudging}).$$

If we can prove this view is indistinguishable from

$$(c_2,c_1,e_{smudging}')$$

where $e_{smudging}'$ is some freshly sampled error from the smudging distribution $\phi'$, then revealing $c_1\cdot s
+ e_{smudging}$ to the aggregator will not leak more information than telling
the aggregator a uniformly random number, since $(c_1,c_2)$ are already known to
the aggregator.

To prove the above claim, let's start with
$$(c_2,c_1,-c_2+c_1s+e_{smudging}).$$
Expanding $c_2$ and $c_1$, we  get that the above is same as
$$(asr+re+e_2,ar+e_1,-re-e_2+se_1+e_{smudging}).$$
With the smudging lemma, the above
is indistinguishable from
$$((as+e)r+e_2,ar+e_1,e_{smudging}').$$
Notice that $e_{smudging}'$ has nothing to do with the secret key $s$, and we can apply Ring-LWE assumption to convert $(as+e,a)$ back to $(b,a)$ as otherwise $(as+e,a)$ is not indistinguishable from $(b,a)$. The above is indistinguishable from
$$(br+e_2,ar+e_1,e_{smudging}')=(c_2,c_1,e'_{smudging}).$$ 
Since $e_{smudging}'$ doesn't depend on either the secret key or
honest devices' data, revealing partial
decryption result will not leak information about either the secret key or honest
devices' data.

\subsection{Details of the setup phase}
\label{s:appendix:design}
During the setup phase, \sys (i) forms the master committee, which then (ii) receives and validates inputs for the round, and (iii) generates keys for cryptographic primitives. We present the details for only the second piece here, as the first and the third are discussed in detail earlier (\S\ref{s:design}).

For the second piece, the master committee needs to check whether there is enough privacy budget to run a training task before launching it. To do this, the committee members need to calculate the new DP parameters before they launch the training task and check whether the new parameters are below some threshold. The details are as follows.

Recall that once the master committee is formed, each committee member receives the model parameters $\theta^t$ for the current round $t$, the user selection probability $q$, noise scale $z$, and clipping bound $S$ from the aggregator for this round of training (required by DP-FedAvg; Figure~\ref{f:dpfedavg}).

Each committee member locally computes new values of the DP parameters $\epsilon, \delta$ using the moment accounting algorithm $\mathcal{M}$ (line~\ref{l:dpfedavg:updateprivacybudget} in DP-FedAvg). This computation requires\changebars{ the DP parameters $\epsilon',\delta'$}{the internal state of $\mathcal{M}$} from round $t-1$ 
in addition to the inputs $z, q$. The committee member downloads the former from a public bulletin board, where it is signed by more than a threshold of honest members of the previous round's master committee. After getting the former DP parameters, the committee member calculates the new $\epsilon, \delta$. If the new values are below their recommended value, the committee member signs a certificate containing the parameters ($\theta^t$, $q$, $z$, $S$), new values of $\epsilon, \delta$, and keys for cryptographic primitives and publish it to the bulletin board to start this training task.

\end{document}